\newcommand{\bbC}{{\mathbb{C}}}
\newcommand{\bbD}{{\mathbb{D}}}
\newcommand{\bbH}{{\mathbb{H}}}
\newcommand{\bbR}{{\mathbb{R}}}
\newcommand{\bbZ}{{\mathbb{Z}}}
\newcommand{\calA}{{\mathcal{A}}}
\newcommand{\calP}{{\mathcal P}}
\newcommand{\lb}{\label}
\newcommand{\supp}{\text{\rm{supp}}}
\newcommand{\bi}{\bibitem}
\newcommand{\beq}{\begin{equation}}
\newcommand{\eeq}{\end{equation}}
\newcommand{\ba}{\begin{align}}
\newcommand{\ea}{\end{align}}
\renewcommand{\Im}{\operatorname{Im}}
\renewcommand{\Re}{\operatorname{Re}}
\newcounter{smalllist}
\newcommand{\comm}[1]{}
\DeclareMathOperator{\diam}{diam}
\numberwithin{equation}{section}
\newtheorem{theorem}{Theorem}[section]
\newtheorem{proposition}[theorem]{Proposition}
\newtheorem{lemma}[theorem]{Lemma}
\theoremstyle{definition}
\newtheorem*{remark}{Remark}
\newtheorem*{remarks}{Remarks}
\newcommand{\jap}[1]{\langle #1 \rangle}
\newcommand{\bigjap}[1]{\left\langle #1 \right\rangle}
\newcommand{\Norm}[1]{\lVert#1\rVert}
\begin{document}

\title[Analyticity of Correlations]{The Strong Gauss Lucas Theorem and Analyticity of Correlation Functions via the Lee-Yang Theorem}
\author[B.~Simon]{Barry Simon$^{1}$}
\thanks{$^1$ Departments of Mathematics and Physics, Mathematics 253-37, California Institute of Technology, Pasadena, CA 91125, USA. E-mail: bsimon@caltech.edu.}

\dedicatory{Dedicated to the memory of Freeman Dyson.}

\

\date{\today}
\keywords{Ising Model, Correlation Functions, Lee-Yang, Gauss-Lucas, Cluster Expansions}
\subjclass[2020]{Primary: 82B20, 20C15, 30C15 ;Secondary: 41A58, 30B40}

\begin{abstract} We provide a simple mechanism for going from Lee-Yang type theorems to analyticity of correlation functions by exploiting under appreciated inequalities of Newman.  We also describe a Lee-Yang approach that recovers the consequences of a low density cluster expansion for spin $S$ models without any combinatorics.
\end{abstract}

\maketitle

\section{Introduction} \lb{s1}

Freeman Dyson was a master of a large swaths of modern theoretical and mathematical physics with important contributions.  He returned several times to the area I'd call the theory of lattice gases, i.e. the Ising and classical Heisenberg models.  Notable are his famous series \cite{DysLR, DysHM2, DysHM3} on the existence of phase transitions in slowly decaying $1D$ Ising models which also introduced the hierarchical models which turn out to be especially useful in mathematical understanding of the renormalization group.

On a more personal level, there is the joint papers \cite{DLS1, DLS2} he wrote with Elliott Lieb and me.  One of the high points of my time in Princeton \cite{IAMPPrinceton} were the weekly several hour meetings the three of us had in Freeman's office in the first few months of 1976 discussing many aspects of spin systems leading to our papers which contain what remains the only rigorous results on continuous system breaking in a quantum statistical mechanics model.  So it seemed appropriate to provide this memorial with some remarks on analyticity in classical lattice gases.

Our main subject here concerns proving analyticity of correlation functions of Ising models as a function of magnetic field using Lee-Yang methods.  This was first addressed by Lebowitz-Penrose \cite{LebPen1} in 1968 who were able to prove it in the spin $1/2$ case.  In 1974, Newman \cite{NewLY1} extended the Lee-Yang theorem to an optimal class of single spin distributions.  It appears that it wasn't until 2012 that Fr\"{o}hlich-Rodriguez \cite{FRod} proved analyticity of correlations in this generality; they had a second paper \cite{FRod2} on cluster expansions and decay of correlations in this generality.  One of our main points here is the remark that Newman \cite{NewLY1} could have proven this result by rather different methods using an inequality he proved but didn't use, namely for $\Re(h)>0$, one has that
\begin{equation}\label{1.1}
  \Re\left(\frac{f_\Lambda(j_1,\dots,j_n;h)}{f_\Lambda(j_1,\dots,j_{n-1};h)}\right) > 0
\end{equation}
where
\begin{equation}\label{1.2}
    f_\Lambda(j_1,\dots,j_n;h) \equiv \jap{\sigma_{j_1}\dots\sigma_{j_n}}_{\Lambda,h}
\end{equation}
with $\jap{\cdot}_{\Lambda,h}$ the free BC Ising expectation in magnetic field $h$.

The short version of this paper is the remark that while the Vitali convergence theorem \cite[Theorem 6.2.8]{BCA} is usually stated assuming the analytic functions, $g_n$, are uniformly bounded on compacts, it is valid if one merely has one-sided bounds on the real parts of $g_n$; the simplest way to see this is to note that if say $\Re(g_n)\ge 0$, then $h_n=e^{-g_n}$ are uniformly bounded, so we can apply Vitali to the $h_n$ and Hurwitz' Theorem \cite[Theorem 6.4.1]{BCA} to see that their limit is non-vanishing which implies convergence of the $g_n$.  From this observation, it is a few lines to conclude convergence and so analyticity of the correlation functions.

Rather than stop with this punchline and a really short paper, I plan to first provide, in Section \ref{s2}, the tools needed for a somewhat more direct proof of \eqref{1.1} and a version of the above observation with quantitative bounds (that thereby provides quantitative bounds on correlations).  In Section \ref{s3}, I will provide the details of the proof of Newman's result, \eqref{1.1} and of convergence and bounds on correlations.  Section \ref{s4} will address a related issue.  There exist (see that Section for references) an extensive literature on using Ruelle's extension of Asano's proof of the Lee-Yang theorem to obtain cluster expansions for Ising models in the high and low temperature regimes but there does not seem to be anything similar for the cluster expansion in the large field (aka large fugacity) region even though this doesn't require the somewhat involved group theoretic considerations of the work on high and low temperatures.  Since it is reasonable to have these results (which avoid any combinatorial estimates) in the literature, I sketch them in Section \ref{s4}.

\section{Fun and Games with Gauss Lucas} \lb{s2}

The Gauss-Lucas theorem (named after Lucas \cite{Lucas1, Lucas2, Lucas3}, whose earliest result was in 1868, and Gauss who never published it but had it in his letters and notebooks as early as 1835) asserts that if $P(z)$ is a polynomial, then the complex roots of $P'$, lie in the convex hull of the complex roots of $P$. The simplest proof follows from the formula
\begin{equation}\label{2.1}
  f(z) \equiv \frac{P'(z)}{P(z)} = \sum_{j=1}^{n}\frac{1}{z-z_j}
\end{equation}
if
\begin{equation}\label{2.2}
  P(z) = A\prod_{j=1}^{n}(z-z_j)
\end{equation}
from which the complex conjugate of $f(w)=0$ implies that (note that if $P'(w)=0$, then either $w$ is equal to some $z_j$ or else $f(w)=0$)
\begin{equation}\label{2.3}
  w=\sum_{j=1}^{n} a_jz_j; \qquad a_j\equiv \frac{|w-z_j|^{-2}}{\sum_{k=1}^{n} |w-z_k|^{-2}}
\end{equation}

The more common proof relies on what we'll call the strong Gauss-Lucas Theorem:

\begin{theorem} [Strong Gauss-Lucas Theorem] \lb{T2.1} If $P$ is a non-constant complex polynomial which is non-vanishing on $\bbH_+\equiv\{z\,\mid\,\Re(z)>0\}$, then one has that $f$ given by \eqref{2.1} obeys
\begin{equation}\label{2.4}
  z\in\bbH_+\, \Rightarrow \, \Re(f(z))>0
\end{equation}
In particular, $P'$ is non-vanishing on $\bbH_+$.
\end{theorem}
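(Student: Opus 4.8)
The plan is to read everything off the partial fraction decomposition \eqref{2.1}, which is the identity the excerpt already records. First I would factor $P$ as in \eqref{2.2} and note the hypothesis: since $P$ is non-vanishing on $\bbH_+$, every root $z_j$ satisfies $\Re(z_j) \le 0$. This is the only place the hypothesis enters.

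Next, the heart of the argument is a termwise real-part computation. For $z \in \bbH_+$ and any $j$,
\begin{equation*}
  \Re\!\left(\frac{1}{z-z_j}\right) = \frac{\Re(z-z_j)}{|z-z_j|^2} = \frac{\Re(z) - \Re(z_j)}{|z-z_j|^2} > 0,
\end{equation*}
because $\Re(z) > 0$ while $\Re(z_j) \le 0$ (and $z \ne z_j$ so the denominator is positive and finite). Summing this over $j = 1, \dots, n$ and using \eqref{2.1} gives $\Re(f(z)) = \sum_{j=1}^n \Re\bigl((z-z_j)^{-1}\bigr) > 0$, which is \eqref{2.4}.

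For the final clause, suppose $P'(w) = 0$ with $w \in \bbH_+$. As the parenthetical remark in the excerpt indicates, either $w$ coincides with some root $z_j$ of $P$, or else $P(w) \ne 0$ and hence $f(w) = P'(w)/P(w) = 0$. The first alternative is impossible since $\Re(z_j) \le 0 < \Re(w)$, and the second contradicts $\Re(f(w)) > 0$ just established. Therefore $P'$ has no zeros in $\bbH_+$.

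There is no real obstacle here — the statement is essentially immediate from \eqref{2.1} once one observes that a half-plane $\{\Re > 0\}$ is exactly the set on which every function $z \mapsto (z - z_j)^{-1}$ with $\Re(z_j) \le 0$ has positive real part, and that positive real parts are closed under addition. The only point deserving a word of care is the degenerate case $P'(w) = 0$ at a multiple root of $P$, which the half-plane hypothesis on the roots rules out.
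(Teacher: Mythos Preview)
Your proof is correct and follows exactly the same route as the paper's: use the partial fraction formula \eqref{2.1}, observe that each term $(z-z_j)^{-1}$ has positive real part on $\bbH_+$ when $\Re(z_j)\le 0$, and sum. The paper's version is simply terser (one sentence), while you spell out the real-part computation and the ``in particular'' clause explicitly.
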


\begin{remark} Once one has this, one sees that if any given open half plane is free of zeros of $P$, it is free of zeros of $P'(z)$.  Since the convex hull of the zeros of $P$ is the complement of the union of all half planes free of zeros, this implies the Gauss-Lucas theorem.
\end{remark}

\begin{proof} If $\Re(z_j)\le 0$ and $z\in\bbH_+$, then $\Re(1/(z-z_j))>0$, so \eqref{2.1} implies \eqref{2.4}.
\end{proof}

Following Lieb-Sokal \cite{LiebSokal}, we define the space $\calA_a$ for any $a\ge 0$, as the space of entire functions with $\Norm{f}_b<\infty$ for all $b>a$ where
\begin{equation}\label{2.5}
  \Norm{f}_b = \sup_z e^{-b|z|^2}|f(z)|
\end{equation}
$\calA_a$ is a countable normed Fr\'{e}chet space with the set of norms $\Norm{f}_{a+1/n}$. One reason that it is better to deal with this Fr\'{e}chet space rather than the Banach space where is a single norm is finite is the freedom of being able to wiggle the value of $b$ in $\Norm{\cdot}_b$ gives us, as is seen by the following easy to prove fact

\begin{proposition} \lb{P2.2} (a) The Taylor series of any $f\in\calA_b$ (some $b\ge 0$) converge to $f$ in the topology of $\calA_b$.

(b) Let $\{f_m\}$ be a sequence which is bounded in $\calA_b$ (some $b\ge 0$), i.e., for each $c>b$, we have $\sup_m \Norm{f_m}_c < \infty$. Suppose the $f_m$ converges on a set with a limit point.  Then $f_m$ has a limit in $\calA_b$.
\end{proposition}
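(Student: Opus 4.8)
The plan is to treat the two parts separately; both reduce to Cauchy estimates on Taylor coefficients, together with (for (b)) a normal-families argument, all adapted to the weighted sup-norms $\Norm{\cdot}_b$.

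For (a), write $f(z)=\sum_{k\ge0}a_kz^k$, fix $c>b$, and choose $b'$ with $b<b'<c$ so that $M:=\Norm{f}_{b'}<\infty$. The first step is the Cauchy estimate $|a_k|\le M e^{b'R^2}R^{-k}$, valid for every $R>0$; optimizing over $R$ (the minimum is at $R^2=k/(2b')$) yields $|a_k|\le M(2eb'/k)^{k/2}$. The second step estimates the tail of the series in the norm $\Norm{\cdot}_c$: setting $r=|z|$ and maximizing $(2eb'/k)^{k/2}r^ke^{-cr^2}$ over $r\ge0$ (maximum at $r^2=k/(2c)$) produces the clean value $(b'/c)^{k/2}$. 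Hence $\Norm{f-\sum_{k\le N}a_kz^k}_c\le M\sum_{k>N}(b'/c)^{k/2}\to0$ as $N\to\infty$, since $b'<c$; letting $c$ run through $b+1/n$, this is exactly convergence of the Taylor partial sums in the Fr\'echet topology of $\calA_b$.

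For (b), the first step is to observe that the boundedness hypothesis, $K_c:=\sup_m\Norm{f_m}_c<\infty$ for every $c>b$, gives $|f_m(z)|\le K_c e^{c|z|^2}$, so $\{f_m\}$ is uniformly bounded on compact subsets of $\bbC$; since the $f_m$ converge on a set with a limit point, Vitali's theorem \cite[Theorem 6.2.8]{BCA} produces an entire $f$ with $f_m\to f$ uniformly on compacts. Letting $m\to\infty$ in $|f_m(z)|\le K_c e^{c|z|^2}$ gives $\Norm{f}_c\le K_c$, so $f\in\calA_b$. The last step is to show $\Norm{f_m-f}_c\to0$ for each fixed $c>b$: pick $b'$ with $b<b'<c$ and split $\bbC$ at radius $R$. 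On $\{|z|\le R\}$ one has $e^{-c|z|^2}|f_m(z)-f(z)|\le\sup_{|z|\le R}|f_m-f|$, which tends to $0$ by local uniform convergence; on $\{|z|>R\}$ one has $e^{-c|z|^2}|f_m(z)-f(z)|\le 2K_{b'}e^{-(c-b')|z|^2}\le 2K_{b'}e^{-(c-b')R^2}$, uniformly in $m$. Choosing $R$ large and then $m$ large completes the argument.

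The whole proof is routine; the only points that need care are the bookkeeping in the two optimizations of (a) — checking that the constants collapse to the geometric factor $(b'/c)^{k/2}$, which is what makes the tail summable — and the uniformity in $m$ of the exterior estimate in (b). Nothing beyond the Cauchy integral formula and Vitali's theorem enters.
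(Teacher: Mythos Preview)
The paper does not actually supply a proof of Proposition~\ref{P2.2}; it is stated as an ``easy to prove fact'' and left to the reader. Your argument is correct and is exactly the sort of routine verification the author has in mind: Cauchy estimates optimized over the radius to bound Taylor coefficients, leading to the geometric tail $(b'/c)^{k/2}$ for part (a), and Vitali plus an inside/outside split at radius $R$ (using an intermediate $b'$ to get uniform decay on $\{|z|>R\}$) for part (b). The only point worth flagging is that the key trick in both halves---choosing an auxiliary exponent $b'$ strictly between $b$ and $c$---is precisely the ``wiggle'' the paper alludes to just before the Proposition when explaining why one works with the Fr\'echet space $\calA_b$ rather than a single Banach norm; your proof makes that remark concrete.
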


The spaces $\calA_a$ have analogs for functions of $\nu$ complex variables.  We define $\Norm{f}_b$ for entire functions, $f(z_1,\dots,z_\nu)$, of $\nu$ variables by
\begin{equation}\label{2.6}
  \Norm{f}_b = \sup\{e^{-b\sum_{j=1}^{\nu}|z_j|^2}|f(z)|\}
\end{equation}
$\calA_a(\bbC^\nu)$ is the space of functions with $\Norm{f}_b<\infty$ for all $b>a$. Proposition \ref{P2.2} extends easily to these spaces.

We also define $\calP^{\nu}$ to be the set of polynomials, $P(z_1,\dots,z_\nu)$, of $\nu$ variables which are non-vanishing if $Re(z_j)>0$ for $j=1,\dots,\nu$ (we denote this set of $\mathbf{z}$ by $\bbH_+^\nu$) and we let $\calP_a^{\nu}$ be its closure in $\calA_a(\bbC^\nu)\setminus\{f\equiv 0\}$.  If $\nu=1$, we will sometimes drop the superscript.  Since convergence in $\calA_a$ implies convergence uniformly on compacts, Hurwitz' theorem implies that if $f\in\calP_a^{\nu}$, then $f$ is non-vanishing on $\bbH_+^\nu$.  However we note that the converse is false for (see \cite{LiebSokal}) the function $z\mapsto e^{bz^2};b>0$ is non-vanishing on $\bbH_+$ but does not lie in any $\calP_a$.

One key to the proof of \eqref{1.1} will be (we use $\partial_j$ as shorthand for $\frac{\partial}{\partial z_j}$)

\begin{theorem} \lb{T2.3}  (a) For each $\nu$, $a>0$ and $j=1,\dots,\nu$, the map $\partial_j$ is a bounded map of $\calA_a(\bbC^\nu)$ to itself.

(b) If $f\in\calP^\nu_a$ and $\partial_j f$ is not identically zero, then $\partial_j f\in\calP^\nu_a$ for $j=1,\dots,\nu$.

(c) If $f\in\calP^\nu_a$ with $\partial_j f$ not identically zero, then on $\bbH_+^\nu$, we have that
\begin{equation}\label{2.7}
  \frac{\partial |f|^2}{\partial x_j}(z) = 2|f(z)|^2\Re\left(\frac{\partial_j f(z)}{f(z)}\right) >0
\end{equation}
\end{theorem}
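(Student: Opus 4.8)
The plan is to treat (a), (b), (c) in turn, with (b) and (c) both reducing --- via the one--variable slices $z_j\mapsto f(z_1,\dots,z_\nu)$ obtained by freezing the remaining variables in $\bbH_+$ --- to the strong Gauss--Lucas theorem (Theorem~\ref{T2.1}). For (a), I would use the Cauchy integral formula in the $j$-th variable, $\partial_j f(z)=\frac1{2\pi i}\oint_{|w-z_j|=r}f(z_1,\dots,w,\dots,z_\nu)(w-z_j)^{-2}\,dw$, which gives $|\partial_j f(z)|\le r^{-1}\sup_{|w-z_j|=r}|f(z_1,\dots,w,\dots,z_\nu)|$. Fixing $c>a$ and $c'$ with $a<c'<c$, one bounds the supremum using $\Norm{f}_{c'}$ and $|w|^2\le|z_j|^2+2r|z_j|+r^2$ on the circle; the only dangerous term, $2c'r|z_j|$, is absorbed into $e^{-(c-c')|z_j|^2}$ by completing the square, leaving $\Norm{\partial_j f}_c\le r^{-1}\exp\!\big(c'r^2+(c')^2r^2/(c-c')\big)\Norm{f}_{c'}$. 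As this holds for every $c>a$, $\partial_j$ maps $\calA_a(\bbC^\nu)$ into itself continuously, hence boundedly; this step is routine.

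For (b) the cleanest route is the observation that, for a polynomial $P\in\calP^\nu$ and $t>0$, the polynomial $Q_t:=P+t\,\partial_j P$ again lies in $\calP^\nu$: freezing $z_k\in\bbH_+$ ($k\ne j$), the resulting polynomial in $z_j$ has all its zeros $\rho_i$ in $\{\Re\le0\}$, so \eqref{2.1} and the computation in the proof of Theorem~\ref{T2.1} give $\Re(Q_t/P)=\Re\big(1+t\sum_i(z_j-\rho_i)^{-1}\big)>0$ on $\bbH_+$ (slices on which $P$ is constant in $z_j$ being trivial), whence $Q_t$ is zero--free on $\bbH_+^\nu$. Then $t^{-1}Q_t\in\calP^\nu$ and $t^{-1}Q_t=t^{-1}P+\partial_j P\to\partial_j P$ in $\calA_a(\bbC^\nu)$ as $t\to\infty$, so $\partial_j P\in\calP^\nu_a$ provided $\partial_j P\not\equiv0$. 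For a general $f\in\calP^\nu_a$, write $f=\lim_m P_m$ with $P_m\in\calP^\nu$; by (a), $\partial_j P_m\to\partial_j f$, so if $\partial_j f\not\equiv0$ the $\partial_j P_m$ are eventually nonzero, hence in $\calP^\nu$, and since $\calP^\nu_a$ is closed, $\partial_j f\in\calP^\nu_a$.

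For (c), the displayed identity is Wirtinger calculus: $\partial/\partial x_j=\partial_j+\partial/\partial\bar z_j$ and holomorphy of $f$ give $\partial|f|^2/\partial x_j=(\partial_j f)\bar f+f\,\overline{\partial_j f}=2\Re\big((\partial_j f)\bar f\big)$, and dividing by $|f|^2$ is legitimate since $f$ is zero--free on $\bbH_+^\nu$ (Hurwitz). So (c) reduces to $\Re(\partial_j f/f)>0$ on $\bbH_+^\nu$. For $P\in\calP^\nu$ this is immediate: by (b), $\partial_j P$ is zero--free on $\bbH_+^\nu$, hence no slice $z_j\mapsto P(z_1,\dots,z_\nu)$ is constant, and the strong Gauss--Lucas theorem applied to such a slice yields $\Re(\partial_j P/P)(z)>0$. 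Passing to $f=\lim_m P_m$, convergence in $\calA_a$ is locally uniform and $f,\partial_j f$ are zero--free on $\bbH_+^\nu$, so $\partial_j P_m/P_m\to\partial_j f/f$ locally uniformly there, giving $\Re(\partial_j f/f)\ge0$.

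The main obstacle is promoting $\ge0$ to $>0$. As $\Re(\partial_j f/f)$ is harmonic on the connected open set $\bbH_+^\nu$, the minimum principle leaves only the alternative that it vanishes identically, i.e.\ $\partial_j f/f\equiv i\alpha$ with $\alpha\in\bbR$, so that $f$ is $e^{i\alpha z_j}$ times an entire function of the other variables. Such $f$ can genuinely belong to $\calP^\nu_a$ (already $f(z)=e^{i\alpha z}$ for $\nu=1$, since $(1+i\alpha z/n)^n\in\calP^1$ converges to it in $\calA_0$), so this case must be excluded using the reality present in the applications: $f$ is there a limit of polynomials with real (indeed nonnegative) Taylor coefficients, i.e.\ $f(\bar z)=\overline{f(z)}$, which forces $\alpha=0$, contradicting $\partial_j f\not\equiv0$. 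With that reality hypothesis the strict inequality, and hence (c), follows.
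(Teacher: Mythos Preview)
Your arguments for (a) and (b) are correct and parallel the paper's. For (a) you use the same Cauchy--estimate--then--absorb--the--cross--term device; the paper simply fixes the radius at $1$ where you keep a free $r$. For (b) the paper takes the shorter route of applying Gauss--Lucas directly to each slice $P(\,\cdot\,,z_2,\dots,z_\nu)$ to conclude $\partial_jP\in\calP^\nu$ and then passes to the limit via (a); your detour through $Q_t=P+t\,\partial_jP$ reaches the same conclusion and has the minor advantage of disposing of constant slices without further comment.

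The substantive point is (c). The paper's argument is: apply \eqref{2.4} to slices of $P\in\calP^\nu$ to get $\Re(\partial_jP/P)>0$ on $\bbH_+^\nu$, then ``taking limits, one sees the final inequality in \eqref{2.7}.'' You have correctly observed that passing to the limit yields only $\Re(\partial_jf/f)\ge 0$, and your example $f(z)=e^{i\alpha z}$ with real $\alpha\ne 0$---which does lie in every $\calP^1_a$ via $(1+i\alpha z/n)^n\in\calP^1$ and Proposition~\ref{P2.2}(b)---shows that the strict inequality asserted in \eqref{2.7} fails at the stated level of generality. Your repair is the right one: the minimum principle for the harmonic function $\Re(\partial_jf/f)$ on the connected set $\bbH_+^\nu$ reduces the failure of strictness to $\partial_jf/f\equiv i\alpha$, and the reality condition $f(\bar z)=\overline{f(z)}$, satisfied by the partition functions $Z$ of \eqref{3.4} and all their $h$--derivatives, forces $\alpha=0$, contradicting $\partial_jf\not\equiv 0$. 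So you have not merely matched the paper's proof but caught and closed a gap it leaves open; the added reality hypothesis is harmless for every application in Sections~\ref{s3}--\ref{s4}.
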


\begin{proof} (a) By symmetry, we can suppose that $j=1$. By a Cauchy estimate,
\begin{equation}\label{2.8}
  \partial_1 f(\mathbf{z}) \le \Norm{f}_{b-\varepsilon,\nu}\exp\left((b-\varepsilon)\left[(|z_1|+1)^2+\sum_{k=2}^{\nu}|z_k|^2\right]\right)
\end{equation}
We can find $C$ so that for all $y>0$, we have that $2(b-\varepsilon)y\le \varepsilon y^2+C$, so with $G=\exp(C+(b-\varepsilon)^2)$, we have that
\begin{equation}\label{2.9}
  \partial_1 f(\mathbf{z}) \le G \Norm{f}_{b-\varepsilon,\nu}\exp\left(b\left[\sum_{k=1}^{\nu}|z_k|^2\right]\right)
\end{equation}

For any $b>a$, pick $\varepsilon=\tfrac{1}{2}(b-a)$ to get $\Norm{\partial_1 f(\mathbf{z})}_b \le G \Norm{f}_{b-\varepsilon,\nu}$ which proves (a).

(b) By the Gauss-Lucas theorem applied to the polynomial $P(\cdot,z_2,\dots,z_\nu)$, one sees that if $P\in\calP^\nu$, then so is $\partial_1 P$ so by (a), if $P_n\in\calP^\nu$ converges to $f$ in $\calA_a$, then $\partial_1 P_n$ converges to $\partial_1 f$ proving (b).

(c) By \eqref{2.4} applied to the polynomial $P(\cdot,z_2,\dots,z_\nu)$, one sees that if $P\in\calP^\nu$, then $\Re(\partial_1 P/P) >0$ on $\bbH_+^\nu$. Taking limits, one sees the final inequality in \eqref{2.7}.  For the first equality, we note that
\begin{align}\label{2.10}
  2\Re\left(\frac{\partial_j f(z)}{f(z)}\right) &= \frac{\partial_j f(z)}{f(z)}+ \frac{\overline{\partial_j f(z)}}{\overline{f(z)}} \nonumber \\
                                                &= \frac{\overline{f(z)}\partial_j f(z)+f(z)\overline{\partial_j f(z)}}{|f(z)|^2} \nonumber \\
                                                &= |f(z)|^{-2} (\partial_j+\overline{\partial_j})(f\overline{f}(z)) \\
                                                &= |f(z)|^{-2} \frac{\partial |f|^2}{\partial x_j}(z) \nonumber
\end{align}
where to get \eqref{2.10}, we used $\overline{\partial_j}f=\partial_j(\overline{f})=0$ (a form of the Cauchy-Riemann equations; see \cite[Problem 2.1.2]{BCA}).
\end{proof}

Later we will need a form of Theorem \ref{T2.3} for spin $1/2$ that goes back to the disk rather than the half plane.

\begin{lemma} \lb{L2.3A} Let $f(z)=Az+Bz^{-1}$ and suppose that for some $R>0$, we have that $f(z)\ne 0$ if $|z|<R$.  Then
\begin{equation}\label{2.10A}
  \Re\left(\frac{z\frac{\partial f}{\partial z}}{f(z)}\right)<0 \text{ and so } z\frac{\partial f}{\partial z}(z) \ne 0
\end{equation}
for $|z|<R$.
\end{lemma}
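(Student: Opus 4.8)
The plan is a short direct computation that collapses onto the elementary fact that the M\"obius map $u\mapsto(u-1)/(u+1)$ sends the open unit disk into the open left half plane. First I would record that $B\neq0$: if $B=0$ then $f(z)=Az$ has a zero at the origin $0\in\{|z|<R\}$ (or else $f\equiv0$), contrary to hypothesis; so $f$ has a pole, not a zero, at $0$. Writing $\partial f/\partial z=A-Bz^{-2}$, one has $z\,\partial f/\partial z=Az-Bz^{-1}$, and hence for $0<|z|<R$
\[
  \frac{z\,\partial f/\partial z}{f(z)}=\frac{Az-Bz^{-1}}{Az+Bz^{-1}}=\frac{Az^{2}-B}{Az^{2}+B},
\]
a rational function of $z$ with no pole at the origin (its value there being $-1$).

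Next I would use the zero-freeness. The zeros of $f$ in $\bbC\setminus\{0\}$ are exactly the roots of $Az^{2}+B$; by hypothesis each such root $z_{1}$ satisfies $|z_{1}|\ge R$, i.e. $|A/B|\,R^{2}\le1$ (vacuously so when $A=0$). Therefore, setting $u\equiv Az^{2}/B$, one has $|u|=|A/B|\,|z|^{2}<|A/B|\,R^{2}\le1$ for $|z|<R$; in particular $u\neq-1$, and
\[
  \frac{Az^{2}-B}{Az^{2}+B}=\frac{u-1}{u+1},\qquad \Re\!\left(\frac{u-1}{u+1}\right)=\frac{|u|^{2}-1}{|u+1|^{2}}<0 .
\]
This is the first inequality in \eqref{2.10A}; and since $f(z)\neq0$ there, $z\,\partial f/\partial z=f(z)\cdot\bigl(z\,\partial f/\partial z\,/\,f(z)\bigr)$ is a product of two nonzero numbers, so it does not vanish for $0<|z|<R$ (and at $z=0$ it trivially does not vanish, having a pole).

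I do not anticipate a real obstacle; the only thing to watch is the bookkeeping under $z\mapsto z^{2}$ (write $w=z^{2}$), under which the zero-free disk $\{|z|<R\}$ for $f$ becomes the zero-free disk $\{|w|<R^{2}\}$ for the linear polynomial $w\mapsto Aw+B$, which is what forces its root outside. One could instead try to invoke Theorem \ref{T2.3} through $g(z)\equiv Az^{2}+B=zf(z)$, using $z\,\partial f/\partial z\,/\,f=z\,g'/g-1$, but $g$ is zero-free on a disk rather than a half plane, and conformally mapping the disk to $\bbH_{+}$ would destroy polynomiality, so the one-line computation is the cleaner route. The reversal of sign relative to \eqref{2.7} --- here $\Re<0$ --- is precisely the effect of differentiating with $z\,\partial_{z}$ rather than $\partial_{z}$, equivalently of the pole of $f$ at the origin.
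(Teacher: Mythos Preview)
Your proof is correct and is essentially the same direct computation as the paper's. The only cosmetic differences are in normalization: the paper disposes of $A=0$ as trivial and then scales to $A=1$, obtaining $\Re\bigl[(z-Bz^{-1})/(z+Bz^{-1})\bigr]=(|z|^{2}-|B|^{2}|z|^{-2})/|z+Bz^{-1}|^{2}<0$ from $|B|\ge R^{2}$, whereas you instead observe $B\neq0$ and substitute $u=Az^{2}/B$ to reduce to the M\"obius map $(u-1)/(u+1)$ on $|u|<1$; these are the same calculation rearranged.
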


\begin{proof} The case $A=0$ is trivial.  If $A\ne 0$, the ratio in \eqref{2.10A} only depends on $B/A$, so without loss, we suppose that $A=1$.  In that case, the condition of not vanishing if $|z|<R$ is equivalent to $|B| \ge R^2$.  Noting that $z\tfrac{\partial f}{\partial z}=z-Bz^{-1}$, we compute
\begin{align}\label{2.10B}
   \Re\left(\frac{z\frac{\partial f}{\partial z}}{f(z)}\right) &=\Re \left[\frac{z-Bz^{-1}}{z+Bz^{-1}}\right] \nonumber \\
                   &=\frac{\Re[(z-Bz^{-1})(\bar{z}+\overline{B}\bar{z}^{-1})}{|z+Bz^{-1}|^2}
                   = \frac{|z|^2-|B|^2|z|^{-2}}{|z+Bz^{-1}|^2} <0
\end{align}
when $|z|<R$ since $|B|\ge R^2$.
\end{proof}

\begin{theorem} \lb{T2.3B} (a) Let $F(z_1,\dots,z_\nu)$ by a function on $(\bbC\setminus\{0\})^\nu$ of the form
\begin{equation}\label{2.10C}
  F(z_1,\dots,z_\nu) = \sum_{\sigma_1=\pm 1,\dots,\sigma_\nu=\pm 1} a(\sigma_1,\dots,\sigma_\nu) z_1^{\sigma_1}\dots z_\nu^{\sigma_\nu}
\end{equation}
and suppose that $F(z_1,\dots,z_\nu)\ne 0$ if $|z_1|<R_1,\dots,|z_\nu|<R_\nu$.  Then for any $\ell$ and $k_1,\dots,k_\ell\in\{1,\dots,\nu\}$, we have that $\prod_{j=1}^{\ell} z_{k_j}\frac{\partial }{\partial z_{k_j}}f(z) \ne 0$ if $|z_1|<R_1,\dots,|z_1|<R_\nu$ and on that set
\begin{equation}\label{2.10D}
  \Re\left(\frac{\prod_{j=1}^{\ell} z_{k_j}\frac{\partial }{\partial z_{k_j}}f(z)}
                  {\prod_{j=1}^{\ell-1} z_{k_j}\frac{\partial }{\partial z_{k_{j}}}f(z)}\right)<0
\end{equation}

(b) Let $Z$ be the function on $\bbC^\nu$ given by
\begin{equation}\label{2.10E}
  Z(h_1,\dots,h_n) = \sum_{\sigma_1=\pm 1,\dots,\sigma_\nu=\pm 1} a(\sigma_1,\dots,\sigma_\nu) \exp\left(\sum_{j=1}^{\nu} h_j\sigma_j\right)
\end{equation}
and suppose that $Z(h_1,\dots,h_\nu)\ne 0$ if $\Re(h_1)>A_1,\dots,\Re(h_\nu)>A_\nu$.  Then for any $\ell$ and $k_1,\dots,k_\ell\in\{1,\dots,\nu\}$, we have that $\frac{\partial^\ell}{\partial h_{k_1}\dots\partial h_{k_\ell}}Z(h) \ne 0$ if $\Re(h_1)>A_1,\dots,\Re(h_\nu)>A_\nu$ and on that set
\begin{equation}\label{2.10F}
  \Re\left(\frac{\frac{\partial^\ell}{\partial h_{k_1}\dots\partial h_{k_\ell}}Z(h)}
                  {\frac{\partial^{\ell-1}}{\partial h_{k_1}\dots\partial h_{k_{\ell-1}}}Z(h)}\right)>0
\end{equation}
\end{theorem}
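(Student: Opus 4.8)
\emph{Plan of proof.} Essentially everything is in part (a); part (b) will then come out by the substitution $z_j = e^{-h_j}$. For (a) the idea is to strip off the Euler operators $D_k \equiv z_k\,\partial/\partial z_k$ one at a time, invoking Lemma \ref{L2.3A} on a one–variable slice at each step. Two elementary observations make this go through. First, $D_k$ sends the monomial $z_1^{\sigma_1}\cdots z_\nu^{\sigma_\nu}$ to $\sigma_k\,z_1^{\sigma_1}\cdots z_\nu^{\sigma_\nu}$, so $D_k$ preserves the class of functions of the form \eqref{2.10C} (it merely replaces $a(\sigma)$ by $\sigma_k\,a(\sigma)$), and $D_1,\dots,D_\nu$ commute, so $\prod_{j=1}^{\ell} z_{k_j}\frac{\partial}{\partial z_{k_j}}$ is the unambiguous composition $D_{k_\ell}\cdots D_{k_1}$. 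Second, if $G$ is of the form \eqref{2.10C} and the variables $z_j$, $j\ne k$, are frozen at values with $0<|z_j|<R_j$, then as a function of $z_k$ alone the restriction of $G$ is exactly $A\,z_k + B\,z_k^{-1}$ with $A,B$ depending on the frozen variables — precisely the shape handled by Lemma \ref{L2.3A}.

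With this in hand, set $Q_0 = F$ and $Q_m = z_{k_m}\frac{\partial}{\partial z_{k_m}}Q_{m-1}$ for $1\le m\le\ell$, and prove by induction on $m$ that $Q_m$ is of the form \eqref{2.10C}, is non-vanishing on the polydisc $\{|z_j|<R_j,\ j=1,\dots,\nu\}$, and (for $m\ge1$) satisfies $\Re(Q_m/Q_{m-1})<0$ there. The case $m=0$ is the hypothesis. For the inductive step, fix $z_j$, $j\ne k_m$, with $0<|z_j|<R_j$; the slice $z_{k_m}\mapsto Q_{m-1}$ is of the form $A\,z_{k_m}+B\,z_{k_m}^{-1}$ and, by the inductive hypothesis, is non-vanishing for $|z_{k_m}|<R_{k_m}$. (In particular this slice is not identically $0$ and is not of the form $A\,z_{k_m}$ with $A\ne0$, since such a function vanishes at the origin, which lies in the disc; the residual degenerate slice $B\,z_{k_m}^{-1}$ is the ``$A=0$'' case of Lemma \ref{L2.3A}.) Lemma \ref{L2.3A} with $R=R_{k_m}$ now gives $z_{k_m}\frac{\partial}{\partial z_{k_m}}Q_{m-1}\ne0$ and $\Re\bigl(z_{k_m}\frac{\partial}{\partial z_{k_m}}Q_{m-1}\big/Q_{m-1}\bigr)<0$ on that disc, and since the frozen coordinates were arbitrary in their discs this is exactly the assertion for $Q_m$. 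Because $Q_m=D_{k_m}\cdots D_{k_1}F$ does not depend on the ordering of the $D_{k_j}$, taking $m=\ell$ yields the non-vanishing statement of (a) together with \eqref{2.10D}.

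For (b), put $\hat F(w_1,\dots,w_\nu)=F(1/w_1,\dots,1/w_\nu)=\sum_\sigma a(-\sigma)\,w_1^{\sigma_1}\cdots w_\nu^{\sigma_\nu}$, which is again of the form \eqref{2.10C}, and note that $Z(h)=\hat F(e^{-h_1},\dots,e^{-h_\nu})$ while $\{\Re(h_j)>A_j\}$ corresponds to $\{|e^{-h_j}|<e^{-A_j}\}$, so $\hat F$ is non-vanishing there; hence part (a) applies to $\hat F$ with $R_j=e^{-A_j}$. By the chain rule $\frac{\partial}{\partial h_k}\bigl[\hat F(e^{-h})\bigr]=-\bigl(w_k\frac{\partial \hat F}{\partial w_k}\bigr)(e^{-h})$, so iterating gives $\frac{\partial^\ell}{\partial h_{k_1}\cdots\partial h_{k_\ell}}Z(h)=(-1)^\ell\,\hat Q_\ell(e^{-h})$, where $\hat Q_\ell=\prod_{j=1}^{\ell} w_{k_j}\frac{\partial}{\partial w_{k_j}}\hat F$. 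The non-vanishing of $\hat Q_\ell$ on the polydisc (from part (a)) gives the non-vanishing statement of (b), and
\[
\Re\!\left(\frac{\frac{\partial^\ell}{\partial h_{k_1}\cdots\partial h_{k_\ell}}Z(h)}{\frac{\partial^{\ell-1}}{\partial h_{k_1}\cdots\partial h_{k_{\ell-1}}}Z(h)}\right)
= \Re\!\left(\frac{(-1)^\ell\,\hat Q_\ell(e^{-h})}{(-1)^{\ell-1}\,\hat Q_{\ell-1}(e^{-h})}\right)
= -\,\Re\!\left(\frac{\hat Q_\ell}{\hat Q_{\ell-1}}\right)\!(e^{-h}) > 0
\]
by \eqref{2.10D} for $\hat F$. (Alternatively one can bypass (a) and rerun the induction directly on $Z$, slicing in $h_{k_m}$ and using the image of Lemma \ref{L2.3A} under $z\mapsto 1/z$, which turns the disc $|z|<R$ into the exterior $|z|>1/R$ and flips the sign of the logarithmic derivative.)

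There is no analytic difficulty here beyond Lemma \ref{L2.3A}; the only thing that needs care — and where I would be most cautious — is the bookkeeping: that the Euler operators preserve the Laurent shape \eqref{2.10C}, that repeated indices among the $k_j$ cause no trouble, and that the degenerate one–variable slices really are covered by the trivial case of Lemma \ref{L2.3A}.
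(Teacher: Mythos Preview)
Your proof is correct and follows the same route as the paper's: part (a) by induction, freezing all but one variable and applying Lemma \ref{L2.3A} to the one–variable slice, and part (b) from (a) via the change of variables $z_j=e^{-h_j}$ (your detour through $\hat F(w)=F(1/w)$ is just the coefficient relabeling implicit in the paper's one–line substitution). The paper's proof says no more than ``(a) follows from the Lemma and induction since we can fix all the variables but the one we are taking the derivative of'' and then records $z_j\partial_{z_j}=-\partial_{h_j}$; your write–up fills in exactly the bookkeeping the paper omits (commutation of the Euler operators, preservation of the Laurent shape under $D_k$, and tracking the sign flip in passing from \eqref{2.10D} to \eqref{2.10F}).

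One small caution on your treatment of the degenerate slice: you rule out the case $B=0$ (slice $=A\,z_{k_m}$) by saying ``such a function vanishes at the origin, which lies in the disc,'' but the ambient domain is $(\bbC\setminus\{0\})^\nu$, so the origin is not a point of the domain. The paper's Lemma \ref{L2.3A} implicitly reads the hypothesis as non–vanishing on the full disc (so that $A\ne 0$ forces $|B|\ge R^2$), and hence silently excludes this degenerate slice; your argument should simply inherit that convention rather than appeal to a vanishing point outside the domain. This is a wording issue, not a mathematical gap, and it affects the paper's formulation as much as your proof.
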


\begin{remark} (b) is of course a consequence of the proof of Theorem \ref{T3.2} below in case the apriori measure is the spin $1/2$ Ising measure and one can get (a) from (b) by the change of variables we use to go in the other direction.  So this is an alternate proof of that special case.
\end{remark}

\begin{proof} (a) follows from the Lemma and induction since we can fix all the variables but the one we are taking the derivative of.  If $z_j=e^{-h_j}$, then $z_j\tfrac{\partial}{\partial z_j}=-\tfrac{\partial}{\partial h_j}$ so with $R_j=e^{-A_j}$, (a) implies (b).
\end{proof}

\bigskip

We end this section with the promised quantitative version of the remark on Vitali under only control of the real part (even though it has no relation to Gauss-Lucas).  One way is to use a basic result from complex analysis, the Borel-Carath\'{e}dory theorem \cite[Problem 3.6.12]{BCA}, that if $f$ is analytic on the unit disk $\bbD$ and continuous on its closure and $0<r<1$, then
\begin{equation}\label{2.11}
  \max_{|z|\le r}|f(z)| \le \frac{2r}{1-r}\max_{|z|= 1}\Re(f(z))+\frac{1+r}{1-r}|f(0)|
\end{equation}
From this and a simple covering argument, one easily shows if $z_0\in K\subset\Omega$ with $K$ compact and $\Omega$ open, there is a constant $C$ (depending only on $z_0$, $K$ and $\Omega$) so that for all $f$ analytic on $\Omega$, one has that
\begin{equation}\label{2.12}
  \sup_{z\in K} |f(z)| \le C\left(|f(z_0)|+\sup_{z\in\Omega} \Re(f(z))\right)
\end{equation}
Instead, we will use  the Herglotz representation for Carathe\'{e}dory functions \cite[Theorem 5.4.1]{HA}, i.e. an analytic function, $g$, on $\bbD$ with $g(0)=1$ and $\Re g\ge 0$, has the form:
\begin{equation}\label{2.13}
  g(z) = \int \frac{e^{i\theta}+z}{e^{i\theta}-z}\,d\mu(e^{i\theta})
\end{equation}
for a probability measure, $d\mu$, on $\partial\bbD$. Since $\max_{\theta}(|1+re^{-i\theta}|)=1+r$ and $\min_{\theta}(|1-re^{-i\theta}|)=1-r$, applying this to $g=f/f(0)$, one concludes that

\begin{theorem} \lb{T2.4} If $f$ is analytic on $\bbD$ with $\Re(f(z))>0$ there and $\Im(f(0))=0$, one has that
\begin{equation}\label{2.14}
  |f(z)| \le  f(0) \,\frac{1+|z|}{1-|z|}
\end{equation}
\end{theorem}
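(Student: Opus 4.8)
The plan is to follow exactly the route sketched just before the statement: reduce to the Herglotz representation \eqref{2.13} and then bound the integrand pointwise. First I would dispose of the normalization. Since $\Re(f(z))>0$ throughout $\bbD$ and, by hypothesis, $\Im(f(0))=0$, the value $f(0)$ is a strictly positive real number; hence $g\equiv f/f(0)$ is a well-defined analytic function on $\bbD$ with $g(0)=1$ and $\Re(g)\ge 0$ (in fact $>0$). This is precisely the class covered by the cited Herglotz theorem, and no boundary regularity of $f$ is needed, only analyticity and the sign of the real part on the open disk.

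Next I would invoke \cite[Theorem 5.4.1]{HA} to write
\begin{equation*}
  g(z) = \int \frac{e^{i\theta}+z}{e^{i\theta}-z}\,d\mu(e^{i\theta})
\end{equation*}
for some probability measure $d\mu$ on $\partial\bbD$. The analytic heart of the argument is then the elementary pointwise bound on the Herglotz kernel: multiplying numerator and denominator by $e^{-i\theta}$ and using that $ze^{-i\theta}$ has modulus $|z|$,
\begin{equation*}
  \left|\frac{e^{i\theta}+z}{e^{i\theta}-z}\right| = \frac{|1+ze^{-i\theta}|}{|1-ze^{-i\theta}|} \le \frac{1+|z|}{1-|z|}
\end{equation*}
for all $|z|<1$ and all $\theta$, by the triangle inequality applied in the numerator ($\le 1+|z|$) and the reverse triangle inequality in the denominator ($\ge 1-|z|$). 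Since $d\mu$ is a probability measure, integrating this bound gives $|g(z)|\le \frac{1+|z|}{1-|z|}$, and multiplying through by the positive number $f(0)$ yields \eqref{2.14}.

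I expect essentially no obstacle here; the only points deserving a word of care are (i) verifying that $\Im(f(0))=0$ together with $\Re(f(0))>0$ really forces $f(0)$ to be a positive real, so that dividing by it is legitimate and the final inequality carries the stated sign, and (ii) confirming that the form of the Herglotz theorem being quoted applies to a merely analytic $g$ on the open disk with $g(0)=1$ and $\Re g\ge 0$ — which is exactly the statement cited, so nothing is lost. One could equally run the whole argument through the Borel--Carath\'eodory inequality \eqref{2.11}, but the Herglotz route delivers the sharp constant $\frac{1+|z|}{1-|z|}$ directly.
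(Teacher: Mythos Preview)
Your proposal is correct and follows essentially the same route as the paper: normalize to $g=f/f(0)$, invoke the Herglotz representation \eqref{2.13}, bound the kernel by $\frac{1+|z|}{1-|z|}$ via the triangle and reverse triangle inequalities, and integrate against the probability measure $d\mu$. There is nothing to add.
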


We will be interested in functions analytic with positive real part on $\bbH_+$, so we conformally map $\bbH_+$ to $\bbD$:

\begin{theorem} \lb{T2.5} If $f$ is analytic on $\bbH_+$ with $\Re(f(h))>0$ there and $\Im f(1)=0$, then for all $h\in\bbH_+$, we have that
\begin{equation}\label{2.15}
  \alpha(h)^{-1} f(1) \le |f(h)| \le\alpha(h) f(1); \qquad \alpha(h) \equiv \frac{|1+h|+|1-h|}{|1+h|-|1-h|}
\end{equation}
\end{theorem}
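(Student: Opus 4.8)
The plan is to transplant the statement to the unit disk via the Cayley transform and then quote Theorem \ref{T2.4}. Let $\psi(z)=\frac{1+z}{1-z}$ be the standard conformal bijection of $\bbD$ onto $\bbH_+$ with $\psi(0)=1$, and let $\varphi(h)=\frac{h-1}{h+1}=\psi^{-1}(h)$ be its inverse, mapping $\bbH_+$ onto $\bbD$. Set $g=f\circ\psi$. Then $g$ is analytic on $\bbD$, one has $\Re(g(z))=\Re\big(f(\psi(z))\big)>0$ for every $z\in\bbD$, and $g(0)=f(1)$, which is a positive real number because $\Re(f(1))>0$ and $\Im(f(1))=0$; in particular $\Im(g(0))=0$. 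Thus $g$ meets the hypotheses of Theorem \ref{T2.4}.

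For the right-hand inequality in \eqref{2.15}, fix $h\in\bbH_+$ and put $z=\varphi(h)$. Then $|z|=\frac{|h-1|}{|h+1|}=\frac{|1-h|}{|1+h|}$, and since
\begin{equation}
  |1+h|^2-|1-h|^2 = 4\Re(h) > 0,
\end{equation}
we have $|z|<1$ (as it must be) and $\frac{1+|z|}{1-|z|}=\frac{|1+h|+|1-h|}{|1+h|-|1-h|}=\alpha(h)$. Because $g(z)=f(h)$, Theorem \ref{T2.4} applied to $g$ yields $|f(h)|\le f(1)\,\alpha(h)$.

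For the left-hand inequality I would apply the same argument to $1/f$. Since $\Re(f)>0$ on $\bbH_+$, the function $f$ is nowhere zero there, so $1/f$ is analytic on $\bbH_+$; moreover $\Re(1/f)=\Re(f)/|f|^2>0$ and $\Im\big((1/f)(1)\big)=-\Im(f(1))/|f(1)|^2=0$, with $(1/f)(1)=1/f(1)>0$. The bound just established, applied to $1/f$, reads $|f(h)|^{-1}\le \alpha(h)/f(1)$, i.e.\ $|f(h)|\ge \alpha(h)^{-1}f(1)$, completing \eqref{2.15}.

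There is no genuine obstacle here: all the analytic content sits in Theorem \ref{T2.4}, and the remainder is the routine Cayley change of variables plus the reciprocal trick for the lower bound. The only step deserving a line of care is the identity $|1+h|^2-|1-h|^2=4\Re(h)$, which simultaneously shows that $\varphi$ sends $\bbH_+$ into $\bbD$ and that $\alpha(h)$ is a well-defined positive quantity (indeed $\alpha(h)\ge 1$).
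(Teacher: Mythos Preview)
Your proof is correct and essentially identical to the paper's: both conformally map $\bbH_+$ to $\bbD$ sending $1\mapsto 0$ (the paper uses $z(h)=\frac{1-h}{1+h}$, you use $\varphi(h)=\frac{h-1}{h+1}=-z(h)$, which has the same modulus), identify $\frac{1+|z|}{1-|z|}$ with $\alpha(h)$, invoke Theorem~\ref{T2.4} for the upper bound, and then apply the same bound to $1/f$ for the lower bound. Your write-up is slightly more detailed in checking the hypotheses for $1/f$ and in noting $|1+h|^2-|1-h|^2=4\Re(h)$, but the argument is the same.
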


\begin{proof} Define
\begin{equation}\label{2.16}
  z(h) = \frac{1-h}{1+h}
\end{equation}
Then, $z$ maps the imaginary axis to the unit circle, has $z(1)=0$ and is a bijection of the Riemann sphere to itself so it maps $\bbH_+$ biholomorphically to $\bbD$.  Let $g$ be defined on $\bbD$ so that $g(z(h))=f(h)$.  Since $|z(h)|=|1-h|/|1+h|$, one sees that
\begin{equation}\label{3.6.10}
  \frac{1+|z|}{1-|z|} = \alpha(h)
\end{equation}
so the second inequality in \eqref{2.15} is just \eqref{2.14}.  By noting that $f(h)^{-1}$ also has a positive real part, we can apply the second inequality to $f(h)^{-1}$ to get the first inequality.
\end{proof}

\section{Convergence and Analyticity of Correlations} \lb{s3}

Here are the models we want to discuss.  We start with an even probability measure, $\mu$, on $\bbR$, called the apriori measure, which obeys

\begin{equation}\label{3.1}
  \int e^{Ax^2}\,d\mu(x) < \infty \text{ for all } A>0
\end{equation}
Given a finite set $\Lambda\subset\bbZ^\nu$, we let $\jap{\cdot}_{0,\Lambda}$ be the expectation in the product measure $\otimes_{k\in\Lambda}d\mu(x_j)$ on $\bbR$.  Fix a symmetric matrix $\{J_{k\ell}\}_{k,\ell\in\Lambda}$ with
\begin{equation}\label{3.2}
  J_{k\ell}\ge 0
\end{equation}
and form the Hamiltonian
\begin{equation}\label{3.3}
  H(\{x_k\}_{k\in\Lambda}) = -\sum_{k,\ell\in\Lambda} J_{k\ell}x_kx_\ell
\end{equation}

For $\mathbf{h}\in\bbC^{\Lambda}$, we are interested in the function (easily seen to be an entire function on $\bbC^\Lambda$):
\begin{equation}\label{3.4}
  Z(\mathbf{h}) = \jap{\exp(-H(x)-\sum_{k\in\Lambda}h_kx_k)}_{0,\Lambda}
\end{equation}
especially on the set $\bbH_+^\Lambda$.  Important is the Lee-Yang property of not vanishing on $\bbH_+^\Lambda$.  That this is true when $d\mu$ is the spin $1/2$ Ising measure, $\tfrac{1}{2}(\delta_{+1}+\delta_{-1})$, is the celebrated Lee-Yang circle theorem \cite{LY2} (the name comes from the fact that they used the variable $z=e^{2\beta h}$ for which their result shows what is in their case the polynomial $z^{|\Lambda|/2}Z(h_j\equiv h)$ has all its zeros on the unit circle).  The Lee-Yang theorem is important because of the realization of Lee-Yang \cite{LY1} that this property, convergence of $(Z_\Lambda)^{1/|\Lambda|}$ when all $h_j=h$ real and the Vitali theorem prove that the pressure (or free energy per unit volume depending on how the model is interpreted) is real analytic for $h>0$ and indeed has an analytic continuation to all of $\bbH_+$.

Newman \cite{NewLY1} found an optimal result specifying those $\mu$ for which the Lee-Yang property holds.  A PN measure is an even probability measure on $\bbR$ obeying the condition that
\begin{equation}\label{3.5}
  E_\mu(z) = \int\,e^{zx}\,d\mu(x)
\end{equation}
is non-vanishing whenever $\Re(z)>0$ (and so also when $\Re(z)<0$ since $E_\mu(-z)=E_\mu(z)$.).  I choose the name after Newman and P\'{o}lya \cite{Polya1, Polya2} (P\'{o}lya got interested in which even measures had Laplace transforms with only imaginary zeros as part of an unsuccessful attempt to prove the Riemann hypothesis).  In particular, P\'{o}lya proved that the measure $N^{-1} e^{-A\cosh(x)}\,dx$ is a PN measure.  P\'{o}lya's approach to the Riemann hypothesis was extended by deBruijn \cite{deBruijn} and Newman \cite{NewPolya}. Not all measures are PN measures; a direct calculation shows that the three point measure, $\tfrac{\lambda}{2}(\delta_{+1}+\delta_{-1})+(1-\lambda)\delta_0; 0<\lambda\le 1$ is a PM measure if and only if $\lambda\ge 1/3$.  But all measures of special interest in statistical mechanics are PN measures: this includes (equal weight) spin $S$ (either by a simple direct calculation or Griffiths \cite{GriffTrick}), the distribution of the first component of a unit vector equidistributed on a $D$-sphere (whose Fourier transform is well known to be a Bessel function, all of whose zeros are real), $N^{-1} e^{-A\cosh(x)}\,dx$ (done by P\'{o}lya, as noted), and $N^{-1}\exp(-ax^4+bx^2)\,dx$ (by Griffiths-Simon \cite{GriffSi} or as noted by Newman \cite{NewLY1} as a scaled limit of P\'{o}lya's example).

A moment's thought shows that a measure has the Lee-Yang property when all $J_{k\ell}=0$ if and only if it is a PN measure.  Newman \cite{NewLY1} made the remarkable discovery that this necessary condition for the Lee-Yang property for all ferromagnetic $J$ is also sufficient. Lieb-Sokal \cite{LiebSokal} found an alternate proof and more importantly the following stronger result.

\begin{theorem} \lb{T3.1} If $\mu$ is a PN measure, then for all $J_{k\ell}\ge 0$, the function $Z$ of \eqref{3.4} lies in $\calP^{|\Lambda|}_{a=0}$ and, in particular, is non-vanishing on $\bbH_+^{|\Lambda|}$.
\end{theorem}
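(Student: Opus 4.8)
The plan is to dispose of the non-interacting model first and then restore the ferromagnetic interaction as a differential operator. I would begin by showing $E_\mu\in\calP^1_{a=0}$. The bound \eqref{3.1} gives $|E_\mu(z)|\le C_\varepsilon e^{\varepsilon|z|^2}$ for every $\varepsilon>0$, so $E_\mu\in\calA_0$; also $E_\mu$ is even, $E_\mu(0)=1$, and $E_\mu(x)\ge 1$ on $\bbR$ by Jensen's inequality (the mean of $\mu$ vanishes). The PN hypothesis puts all zeros of $E_\mu$ on $i\bbR$, so Hadamard's theorem gives $E_\mu(z)=e^{\gamma z^2}\prod_k(1+z^2/b_k^2)$ with $\gamma$ real, and comparing the growth of the two sides on the real axis forces $\gamma=0$ and $\sum_k b_k^{-2}<\infty$. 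The partial products $P_N(z)=\prod_{k\le N}(1+z^2/b_k^2)$ are polynomials whose only zeros $\pm ib_k$ are purely imaginary, so $P_N\in\calP^1$; the elementary bound $|E_\mu(z)-P_N(z)|\le E_\mu(|z|)-P_N(|z|)$ together with $E_\mu\in\calA_0$ gives $P_N\to E_\mu$ in $\calA_0$, hence $E_\mu\in\calP^1_0$. Since the seminorms $\Norm{\cdot}_b$ are multiplicative over functions of disjoint groups of variables, it follows that $\prod_{k\in\Lambda}E_\mu(-h_k)\in\calP^{|\Lambda|}_0$.

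\textbf{Step 2 (interaction as a differential operator).} Differentiating $e^{-\sum_k h_kx_k}$ in $h_k$ brings down a factor $-x_k$, so expanding $e^{-H}=\exp(\sum_{k,\ell}J_{k\ell}x_kx_\ell)$ and interchanging the sum with $\jap{\cdot}_{0,\Lambda}$ (legitimate by \eqref{3.1}) gives
\begin{equation*}
  Z(\mathbf h)=\exp\Big(\sum_{k,\ell\in\Lambda}J_{k\ell}\,\partial_k\partial_\ell\Big)\prod_{k\in\Lambda}E_\mu(-h_k),\qquad \partial_k=\frac{\partial}{\partial h_k},
\end{equation*}
the operator being a commuting product of one-parameter factors $e^{J_{kk}\partial_k^2}$ and $e^{2J_{k\ell}\partial_k\partial_\ell}$ ($k<\ell$). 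The Cauchy estimate in the proof of Theorem \ref{T2.3}(a) shows $\partial_k$ maps $\calA_0$ into $\calA_0$, and a related Cauchy-estimate argument (taking the Cauchy radii to grow like $\sqrt n$) shows each such exponential converges and maps $\calA_0$ into $\calA_0$, with $(1+\tfrac tN\partial_k\partial_\ell)^N\to e^{t\partial_k\partial_\ell}$ in $\calA_0$. Thus $Z\in\calA_0(\bbC^{|\Lambda|})$, and since $Z$ is positive (hence $\not\equiv 0$) on the positive reals, it suffices to show each one-parameter factor maps $\calP^{|\Lambda|}_0$ into itself; then $Z\in\calP^{|\Lambda|}_0$ and Hurwitz' theorem gives $Z\ne 0$ on $\bbH_+^{|\Lambda|}$.

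\textbf{Step 3 and the main obstacle.} Using $e^{J_{kk}\partial^2}E_\mu(-h_k)\propto E_{\tilde\mu_k}(-h_k)$ with $d\tilde\mu_k=e^{J_{kk}x^2}d\mu/c_k$, and the fact that $\tilde\mu_k$ is again a PN measure satisfying \eqref{3.1} — after the rotation $z\mapsto iz$ the even, imaginary-zero function $E_\mu$ becomes an even, real-zero one and $e^{J_{kk}\partial^2}$ becomes $e^{-J_{kk}\partial^2}$, which preserves real-rootedness by Hermite--Poulain — one may absorb the diagonal couplings and assume $J_{kk}=0$. It then remains to show that $e^{t\partial_k\partial_\ell}$ ($t>0$, $k\ne\ell$) maps $\calP^{|\Lambda|}_0$ into itself, and this is the crux: the strong Gauss--Lucas theorem of Section \ref{s2} does not suffice, because $\calP^\nu$ is not closed under the addition implicit in $1+t\partial_k\partial_\ell$. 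What is needed is the Lieb--Sokal lemma that $1+t\partial_k\partial_\ell$ carries any polynomial non-vanishing on $\bbH_+^{|\Lambda|}$ to one non-vanishing there — provable by an Asano-type contraction, or via the Grace--Walsh--Szeg\H{o} coincidence theorem. Given this, $e^{t\partial_k\partial_\ell}=\lim_N(1+\tfrac tN\partial_k\partial_\ell)^N$ is an $\calA_0$-limit of functions non-vanishing on $\bbH_+^{|\Lambda|}$, hence lies in $\calP^{|\Lambda|}_0$ by Hurwitz; composing all the factors yields $Z\in\calP^{|\Lambda|}_0$. Everything around this one lemma — Hadamard factorization with the growth bound \eqref{3.1}, the Cauchy estimates putting the operators onto $\calA_0$, Hurwitz, and Proposition \ref{P2.2} — is soft, so the $1+t\partial_k\partial_\ell$ estimate is the single place where a genuinely new inequality (exactly the Lieb--Sokal strengthening of Lee--Yang) enters.
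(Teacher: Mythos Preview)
The paper does not actually prove Theorem~\ref{T3.1}; it is quoted as a result of Lieb--Sokal \cite{LiebSokal}, with the remark that the one-component case needed here has a simpler proof given in an appendix of that reference.  Your proposal is precisely a sketch of that Lieb--Sokal argument: (i) place $E_\mu$ in $\calP^1_0$ via the Hadamard/Laguerre--P\'olya factorization, (ii) write $Z=\exp\bigl(\sum_{k,\ell}J_{k\ell}\partial_k\partial_\ell\bigr)\prod_k E_\mu(-h_k)$, and (iii) reduce to showing that $1+t\,\partial_k\partial_\ell$ (for $t\ge 0$, $k\ne\ell$) preserves $\calP^\nu$.  You correctly isolate this last statement as the single substantive ingredient; it is exactly Proposition~2.2 of \cite{LiebSokal}, proved there by an Asano-contraction argument (and, as you note, also reachable via Grace--Walsh--Szeg\H{o}).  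So there is nothing in the present paper to compare against---you have accurately reconstructed the cited source, and the surrounding ``soft'' steps (Cauchy estimates on $\calA_0$, Hurwitz, Proposition~\ref{P2.2}) are as you describe.

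One small technical comment on Step~1: the one-line justification ``comparing growth on the real axis forces $\gamma=0$ and $\sum b_k^{-2}<\infty$'' is a bit compressed.  The upper bound $E_\mu\in\calA_0$ together with $\prod(1+x^2/b_k^2)\ge 1$ on $\bbR$ gives only $\gamma\le 0$.  To get $\gamma\ge 0$ one also needs that the canonical product itself lies in $\calA_0$ (which follows once $\sum b_k^{-2}<\infty$, by splitting the sum $\sum_k\log(1+x^2/b_k^2)$ into a finite head and an $\varepsilon x^2$ tail) combined with your observation $E_\mu(x)\ge 1$; and establishing $\sum b_k^{-2}<\infty$ in the first place is the standard Lindel\"of/Laguerre--P\'olya fact that an even function of order~$2$ and minimal type with purely imaginary zeros has convergent canonical product of genus~$1$.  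None of this is deep, but it is a little more than a growth comparison on one ray.
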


\begin{remark} \cite{LiebSokal} prove a stronger result that obtains some results for apriori measures in some $\bbR^m$ but the quoted result has a simpler proof which they prove in an appendix (as well as from their more general result), and suffices for what we need here.
\end{remark}

We turn next to the correlation functions.  We fix a translation invariant pair interaction $J(j-k)\ge 0$ (with $J(-j)=J(j)$) and for $h>0$ define
\begin{equation}\label{3.6}
  f_\Lambda(j_1,\dots,j_n;h) \equiv \jap{\sigma_{j_1}\dots\sigma_{j_n}}_{\Lambda,h}
\end{equation}
where $\jap{\cdot}_{\Lambda,h}$ is the free BC state with pair interaction, external magnetic field $h$ and apriori measure $d\mu$ at each site.  We'll also define
\begin{equation}\label{3.7}
  J=\sum_{j\in\bbZ^\nu} J(j)
\end{equation}
which we suppose is finite.

It is a fundamental consequence of the analyticity guaranteed by Theorem \ref{T3.1} (Ruelle \cite{RuelleUnique}, Lebowitz-Martin-L\"{o}f \cite{LebML}) that when $d\mu$ is a PN measure, there is a unique equilibrium state which is the limit of the $\jap{\cdot}_{\Lambda,h}$. We use $f(j_1,\dots,j_n; h)$ for this limit. (The uniqueness result requires that $\supp(\mu)$ is compact.  In general \cite{LebPres}, one only gets a unique tempered state - the limit of the free BC is tempered.  Since it is peripheral, we'll ignore this issue; the reader can either supply details or assume the support is compact).

From Theorem \ref{T3.1} and the methods of Section \ref{s2}, we get the main result of this note:

\begin{theorem} \lb{T3.2} Let the single site distribution be a PN measure.  The infinite volume limits, $f(j_1,\dots,j_n;h)$, have analytic continuations to the region $\Re(h)>0$ and obey
\begin{equation}\label{3.8}
  \alpha(h)^{-n} L^n \le  |f(j_1,\dots,j_n;h)| \le \alpha(h)^n Q_n
\end{equation}
where $\alpha$ is given by \eqref{2.15}, $Q_n$ is an explicit $d\mu$ dependent constant,
\begin{equation}\label{3.9}
  L = \frac{\int x e^x\,d\mu(x)}{\int e^x,d\mu(x)}
\end{equation}
Moreover, the finite volume correlations converge to this analytic function for all $h$ with $\Re(h)>0$.
\end{theorem}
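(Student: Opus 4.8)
\emph{Overview.} The plan is to write, at finite volume, the $n$-point function as a telescoping product of ratios, each governed by Theorems \ref{T2.3} and \ref{T2.5}, to read off \eqref{3.8} from that product, and then to take the infinite-volume limit by a Vitali/normal-families argument. Fix $\Lambda$, write $Z_\Lambda$ for the entire function \eqref{3.4}, and set $g_0=Z_\Lambda$ and $g_m=\partial_{j_m}g_{m-1}=\partial_{j_1}\cdots\partial_{j_m}Z_\Lambda$ for $m\ge1$ (if the $j_i$ are not distinct, read this as the corresponding iterated derivative; nothing below changes). Differentiating under the expectation in \eqref{3.4} and using that $\mu$ is even, so that $Z_\Lambda$ is an even function of $\mathbf h$, one gets $f_\Lambda(j_1,\dots,j_m;h)=g_m(h\bdone)/Z_\Lambda(h\bdone)$, where $h\bdone=(h,\dots,h)$. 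Griffiths' inequalities for ferromagnetic even single-site measures show that no $g_m$ vanishes identically, so, starting from $g_0=Z_\Lambda\in\calP^{|\Lambda|}_{0}$ (Theorem \ref{T3.1}) and iterating Theorem \ref{T2.3}(b), each $g_m\in\calP^{|\Lambda|}_{0}$, hence is non-vanishing on $\bbH_+^{|\Lambda|}$; and Theorem \ref{T2.3}(c) gives $\Re\bigl(g_m(\z)/g_{m-1}(\z)\bigr)>0$ on $\bbH_+^{|\Lambda|}$. Restricting to the diagonal $\z=h\bdone$, which lies in $\bbH_+^{|\Lambda|}$ precisely when $h\in\bbH_+$, the ratio
\[
  r^{(m)}_\Lambda(h)\;:=\;\frac{f_\Lambda(j_1,\dots,j_m;h)}{f_\Lambda(j_1,\dots,j_{m-1};h)}\;=\;\frac{g_m(h\bdone)}{g_{m-1}(h\bdone)}
\]
is analytic on $\bbH_+$ (numerator and denominator are non-vanishing there) with $\Re r^{(m)}_\Lambda>0$ on $\bbH_+$; this is exactly \eqref{1.1}. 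Since $h=1$ is real, $r^{(m)}_\Lambda(1)\in\bbR$, and it is $>0$ as a limit of values of a function with positive real part.

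\emph{Quantitative bounds.} Apply Theorem \ref{T2.5} to $f=r^{(m)}_\Lambda$ (analytic on $\bbH_+$, positive real part, real at $h=1$): $\alpha(h)^{-1}r^{(m)}_\Lambda(1)\le|r^{(m)}_\Lambda(h)|\le\alpha(h)\,r^{(m)}_\Lambda(1)$. Multiplying over $m=1,\dots,n$ and telescoping, since the empty correlation is $1$ (so $\prod_m r^{(m)}_\Lambda=f_\Lambda(j_1,\dots,j_n;\cdot)$ and $\prod_m r^{(m)}_\Lambda(1)=f_\Lambda(j_1,\dots,j_n;1)$), gives
\[
  \alpha(h)^{-n}\,f_\Lambda(j_1,\dots,j_n;1)\;\le\;|f_\Lambda(j_1,\dots,j_n;h)|\;\le\;\alpha(h)^{n}\,f_\Lambda(j_1,\dots,j_n;1).
\]
It then remains to bound $f_\Lambda(j_1,\dots,j_n;1)=\jap{\sigma_{j_1}\cdots\sigma_{j_n}}_{\Lambda,1}$ uniformly in $\Lambda$. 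For the lower bound, Griffiths' second inequality (monotonicity of $\jap{\sigma_A}$ in each $J_{k\ell}\ge0$, valid for even single-site measures) gives $\jap{\sigma_{j_1}\cdots\sigma_{j_n}}_{\Lambda,1}\ge\jap{\sigma_{j_1}\cdots\sigma_{j_n}}_{\Lambda,1}\big|_{J\equiv0}=L^n$, the last equality because at $J\equiv0$ the measure factorizes and the one-site one-point function at $h=1$ is the $L$ of \eqref{3.9}. For the upper bound, when $\supp\mu\subseteq[-M,M]$ one has the crude uniform bound $f_\Lambda(j_1,\dots,j_n;1)\le M^n=:Q_n$, since at the real point $h=1$ the state $\jap{\cdot}_{\Lambda,1}$ is a genuine probability average; in general one passes instead to the tempered limit, as flagged above. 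This establishes \eqref{3.8} with $f_\Lambda$ in place of $f$.

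\emph{Infinite-volume limit.} For real $h>0$, Griffiths' inequalities make $f_\Lambda(j_1,\dots,j_n;h)$ monotone in $\Lambda$, and the uniform upper bound just obtained makes it bounded, so $f(j_1,\dots,j_n;h):=\lim_{\Lambda}f_\Lambda(j_1,\dots,j_n;h)$ exists for every $h\in(0,\infty)$. The displayed bound also shows that $\{f_\Lambda(j_1,\dots,j_n;\cdot)\}_\Lambda$ is uniformly bounded on each compact subset of $\bbH_+$ (there $\alpha$ is finite and continuous), hence a normal family; since it converges on $(0,\infty)$, which has a limit point in $\bbH_+$, the Vitali convergence theorem \cite[Theorem 6.2.8]{BCA} produces a limit analytic on all of $\bbH_+$, with convergence uniform on compacts, and this limit is therefore the asserted analytic continuation of $f(j_1,\dots,j_n;\cdot)$. (Alternatively, and in the spirit of the Introduction: each $e^{-r^{(m)}_\Lambda}$ has modulus $\le1$ on $\bbH_+$ and converges on $(0,\infty)$, so Vitali applies to these; Hurwitz' theorem \cite[Theorem 6.4.1]{BCA} shows the limits are non-vanishing, giving convergence of the $r^{(m)}_\Lambda$ and hence of their finite product.) Passing to the limit in the finite-volume version of \eqref{3.8} gives \eqref{3.8} for $f$, completing the proof.

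\emph{Expected main obstacle.} Everything except the two estimates on $f_\Lambda(j_1,\dots,j_n;1)$ is essentially formal given Section \ref{s2}. The lower bound $\ge L^n$ is exactly Griffiths' second inequality; the uniform upper bound is immediate for compact single-site support but is precisely the point where the general PN case requires the tempered-state discussion the text defers. A minor bookkeeping issue is verifying that no $g_m=\partial_{j_1}\cdots\partial_{j_m}Z_\Lambda$ is identically zero, which is needed before invoking Theorem \ref{T2.3}(b),(c); this amounts to $\jap{\sigma_{j_1}\cdots\sigma_{j_m}}_{\Lambda,t}\ne0$ at a suitable real $t$, again a consequence of Griffiths' inequalities, and it simultaneously disposes of the case of repeated indices via powers $\partial_j^{k}$.
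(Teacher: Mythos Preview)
Your argument is essentially the paper's: derive \eqref{1.1} from Theorems \ref{T3.1} and \ref{T2.3}, telescope with Theorem \ref{T2.5} to reduce \eqref{3.8} to the two-sided bound \eqref{3.12} at $h=1$, obtain the lower bound $L^n$ via GKS, and finish with Vitali. Two minor deviations are worth flagging. First, for convergence on the real axis the paper invokes the uniqueness of the equilibrium state (Ruelle, Lebowitz--Martin-L\"of) rather than GKS monotonicity in $\Lambda$; your route is fine, though note that your equality $\jap{\sigma_{j_1}\cdots\sigma_{j_n}}_{J\equiv0,h=1}=L^n$ literally requires distinct $j_k$ (for repeats one needs a further GKS step $\jap{\sigma^m}\ge\jap{\sigma}^m$, or argue as the paper does by first splitting $\jap{\sigma_{j_1}\cdots\sigma_{j_n}}\ge\prod_k\jap{\sigma_{j_k}}$). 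Second, for the uniform upper bound in the non-compact case the paper does not ``pass to the tempered limit'' but instead uses H\"older, $f_\Lambda(j_1,\dots,j_n;1)\le\prod_k f_\Lambda(j_k,\dots,j_k;1)^{1/n}$, and then Ruelle's a priori probability estimates \cite{RuUnbdd} to bound the diagonal moments uniformly in $\Lambda$; that is the missing ingredient in your general-case sketch.
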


\begin{remarks} 1. We emphasize that the upper bounds in \eqref{3.8} depend only on $n$ and $h$ and are uniform the $j_k$'s.  This is useful in proving $m\ge 0$ for the argument in \cite{LebPen2}.  In general, we will get the upper bounds at $h=1$ using Ruelle \cite{RuUnbdd}, but, of course if $d\mu$ has compact support with convex hull, $[-S,S]$, we can use the use the trivial bound $Q_n=S^n$.

2. By the uniqueness of state for real $h$ we get that the limits exist for $\Re(h)>0$ for any BC where the Lee-Yang theorem is applicable (so the finite volume expectation has non-vanishing denominator), e.g. periodic BC.
\end{remarks}

\begin{proof} The derivatives of $Z_\Lambda(\{h_\ell\}_{\ell\in\Lambda})$ are given by
\begin{equation}\label{3.10}
  \frac{\partial^{n-1}}{\partial h_{j_1}\dots\partial h_{j_{n-1}}}Z = Z\jap{\sigma_{j_1}\dots\sigma_{j_{n-1}}}
\end{equation}
where the expectation is with a $j$ dependent $h_j$. So by Theorem \ref{T3.1} and \eqref{2.7}, we conclude inductively that $\jap{\sigma_{j_1}\dots\sigma_{j_{n-1}}}$ is non-vanishing when $h\in\bbH_+$ and in that region, one had that
\begin{equation}\label{3.11}
  \Re\left(\frac{f_\Lambda(j_1,\dots,j_{n};h)}{f_\Lambda(j_1,\dots,j_{n-1};h)}\right)>0
\end{equation}

By Theorem \ref{T2.5} and Vitali's theorem, the proof of our theorem is reduced to proving that
\begin{equation}\label{3.12}
   L^n \le  f_\Lambda(j_1,\dots,j_n;h=1) \le  Q_n
\end{equation}
By GKS inequalities (Kelly-Sherman \cite{KSonGKS} or Simon \cite[Chapter 2]{PTLG}), we have that $f_\Lambda(j_1,\dots,j_n;h=1) \ge \prod_{k=1}^{n} f_\Lambda(j_k;h=1) \ge (\jap{\sigma}_{0,h=1})^n=L^n$ where $\jap{\cdot}_{0,h=1}$ is the expectation of a single spin in external field $h=1$.  By Holder's inequality
\begin{equation*}
   f_\Lambda(j_1,\dots,j_n;h=1) \le \prod_{k=1}^{n}(f_\Lambda(j_k,\dots,j_k;h=1))^{1/n}
\end{equation*}
Under our assumptions, the system obeys all the requirement of Ruelle \cite{RuUnbdd} who proves \cite[Theorem 2.2]{RuUnbdd} explicit apriori bounds on probabilities that imply bounds on $f_\Lambda(j_k,\dots,j_k;h=1)$, uniformly in $\Lambda$; see \cite[Section 2.3]{PTLG}.
\end{proof}

\section{A Poor Person's Large Field Cluster Expansion} \lb{s4}

The Lee-Yang idea \cite{LY1} that tracking zeros can be used to prove analyticity can also be used to provide the results of cluster expansions without any combinatorial estimates at all.  As explicated by collaborations around Gruber and Slawny (some of their basic papers are \cite{GHM, MerGrub, SlawnyLT, SlawnyLT2, HSlawnyLT, SlawnyReview}) this can be done for both the high temperature and ferromagnetic low temperature regions.  Remarkably, there does not seem to be in the literature an explicit version of this for the large field (aka large fugacity, low density or high density) region even though as we'll see it is quite simple without the need for the involved group theoretic analysis of the high and low temperature exapnsions.  The one big limitation compared to the more usual cluster expansions is that the analysis is restricted to spin $1/2$ (or equal weight spin $S$).

Here is the framework we'll use.  At each point, $j\in\bbZ^\nu$, we have a $\pm 1$ Ising spin, $\sigma_j$.  For any finite subset $A\subset\bbZ^\nu$, we define
\begin{equation}\label{4.1}
  \sigma^A=\prod_{j\in A} \sigma_j
\end{equation}
and, as usual, $\jap{\cdot}_{0,\Lambda}$ is the product of equal weight Bernoulli expectation of spins in a finite set $\Lambda\subset\bbZ^\nu$.

We fix $J_0(A)\ge 0$ for all $A$ with $\#(A)\ge 2$ with two properties: it is translation invariant and the collection, $\calA$, of those $A$ with $J_0(A)\ne 0$ is finite range in the sense that
\begin{equation}\label{4.2}
  q=\#\{A\,\mid\,A\in\calA, A\ni 0\} < \infty
\end{equation}
We let $v$ be the number of equivalence class under translations of $A\in\calA$ (so $2v\le q$, since if $A\in\calA$, we have that $\#(A)$ translates of it containing $0$).

Below, when we write $J(A)$, we will mean possible complex numbers which are translation invariant (we will suppose that $J(A)=0$ if $A\notin\calA$). In the usual way (Ruelle \cite{RuBk1}, Simon \cite{SMLG}), for real parameters, $J(A)$ and $h$, one forms the finite volume Hamiltonian, partition function and pressure
\begin{align}\label{4.3}
  -H_\Lambda &= \sum_{A\subset\Lambda}J(A)\sigma^A+h\sum_{j\in\Lambda}\sigma_j \nonumber\\
        Z_\Lambda=\jap{e^{-H}}_{0,\Lambda}  &\qquad p=\lim |\Lambda|^{-1}\log(Z_\Lambda)
\end{align}
and one defines equilibrium states via the DLR equations. Finally, we define
\begin{equation}\label{4.4}
  I_0 = \max_{A\in\calA} 2^{\#(A)}e^{2J_0(A)}
\end{equation}

Here is what we'll prove:

\begin{theorem} \lb{T4.1} Given an interaction as just defined, there is a unique translation equilibrium invariant state and the pressure and all correlation functions are jointly analytic on the open set in $\bbC^{v+1}$ given by $\{h\,\mid\,|e^{-2h}|<1/qI_0\}\times\{J(A)\,\mid\, A\in\calA, |J(A)|<J_0(A)\}$.  For this unique translation invariant equilibrium state, the mass gap defined by
\begin{equation}\label{4.5}
  m = \limsup_{|k-\ell|\to\infty}\left\{-\frac{1}{|k-\ell|} \log
                        \left[\jap{\sigma_k\sigma_\ell} -\jap{\sigma_k}\jap{\sigma_\ell}\right]\right\}
\end{equation}
is strictly positive.
\end{theorem}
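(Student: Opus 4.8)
\medskip

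\noindent\textbf{Proof proposal.} The plan is to transplant the argument of Sections~\ref{s2}--\ref{s3} almost verbatim, replacing the Lee--Yang input of Theorem~\ref{T3.1} by an Asano--Ruelle zero-location estimate tuned to the low-density regime, and then to read off the mass gap from analyticity on a disc. Since $\sigma_j^2=1$ in the spin $1/2$ case, assigning to each site the lattice-gas activity $\zeta_j=e^{-2h_j}$ (I allow site-dependent fields $h_j$, specializing to $h_j\equiv h$ only at the end), the finite-volume partition function of \eqref{4.3} is a nowhere-zero prefactor times a function
\begin{equation*}
  \widehat Z_\Lambda(\{\zeta_j\})=\sum_{X\subseteq\Lambda}\Big(\prod_{j\in X}\zeta_j\Big)w_\Lambda(X),\qquad w_\Lambda(\emptyset)=1,
\end{equation*}
which is multiaffine in the $\zeta_j$; moreover, once the value $\sigma^A=+1$ is divided out, each factor $e^{J(A)\sigma^A}$ is affine in $e^{-2J(A)}$ and multiaffine in the occupation variables of the sites of $A$. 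The entire statement hinges on the single estimate
\begin{equation}\label{eq:box}
  \widehat Z_\Lambda(\{\zeta_j\})\neq0\quad\text{whenever}\quad |\zeta_j|<\tfrac1{qI_0}\ \text{ for all }j\in\Lambda,
\end{equation}
holding uniformly in $\Lambda$ and in all complex couplings with $|J(A)|<J_0(A)$; equivalently, $Z_\Lambda(\{h_j\})\neq0$ as soon as $\Re(h_j)>\tfrac12\log(qI_0)$ for every $j$. This is the large-field counterpart of the Lee--Yang property used in Section~\ref{s3}, and $qI_0$ is precisely a Dobrushin-type product of the single-site activity amplification $I_0$ (see \eqref{4.4}) with the connectivity $q$ (see \eqref{4.2}).

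\smallskip

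To prove \eqref{eq:box} I would use the Asano contraction lemma together with its extension by Ruelle, exactly in the spirit of the Gruber--Slawny treatment of the high- and low-temperature expansions (\cite{SlawnyReview,GHM}; Ruelle \cite{RuBk1}), but with neither duality nor group theory. Introduce one independent copy of the spin at each (site, interacting set) incidence, plus one further copy per site for the external-field factor, so that the summand of $Z_\Lambda$ becomes a product of elementary polynomials over disjoint sets of variables: a linear factor $1+\zeta^{(0)}_j$ per field copy, zero-free for $|\zeta^{(0)}_j|<1$, and one multiaffine factor $g_A$ per $A\in\calA$, equal to $1$ when all of its variables vanish, whose zero-free polydisc one computes directly and whose worst case over $|J(A)|<J_0(A)$ is governed precisely by the constant $2^{\#(A)}e^{2J_0(A)}\le I_0$. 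Contracting at each site the $q$-or-fewer copies back to the single variable $\zeta_j$ via the Asano--Ruelle lemma, and tracking how the radii combine (this is where the factor $q$ enters), yields zero-freeness on $\{|\zeta_j|<1/(qI_0)\}$, uniformly in $\Lambda$. \emph{This step -- and in particular obtaining the constant in the form stated -- is the main obstacle}; everything downstream is a matter of invoking tools already in the paper.

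\smallskip

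Granting \eqref{eq:box}: for each fixed admissible $\{J(A)\}$ the function $Z_\Lambda(\{h_\ell\})$ has the form \eqref{2.10E} and is nonzero on $\{\Re(h_\ell)>\tfrac12\log(qI_0)\}$, so Theorem~\ref{T2.3B}(b) shows that each finite-volume correlation $f_\Lambda(j_1,\dots,j_n;\cdot)$ is analytic and non-vanishing there, with positive real part (as in \eqref{2.10F}) for the consecutive ratios $g_k:=f_\Lambda(j_1,\dots,j_k;\cdot)/f_\Lambda(j_1,\dots,j_{k-1};\cdot)$, so that $f_\Lambda=\prod_k g_k$; letting $\{J(A)\}$ vary shows this holds jointly on the open polydisc $\mathcal{O}:=\{|e^{-2h}|<1/(qI_0)\}\times\prod_A\{|J(A)|<J_0(A)\}\subset\bbC^{v+1}$, and $\log Z_\Lambda$ (hence the finite-volume pressure) is analytic on $\mathcal{O}$ with $|\Lambda|^{-1}\log|Z_\Lambda|$ bounded above on compacts uniformly in $\Lambda$. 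Applying the Herglotz bound \eqref{2.13} (i.e.\ Theorem~\ref{T2.4}, conformally transported as in Theorem~\ref{T2.5}) one variable at a time about a reference point on the real slice, one bounds each $|g_k|$ on compact subsets of $\mathcal{O}$ by its value at that real reference point, which by the GKS inequalities (\cite{KSonGKS,PTLG}) and a Ruelle a priori bound -- exactly as for \eqref{3.12} -- is bounded above and below uniformly in $\Lambda$; hence $|f_\Lambda|$ and $|\Lambda|^{-1}\Re\log Z_\Lambda$ are bounded on compacts of $\mathcal{O}$ uniformly in $\Lambda$. On the real slice the free-boundary-condition correlations and pressures converge (GKS monotonicity in $\Lambda$, subadditivity), so Vitali's theorem -- in the form of the opening remarks, or simply Montel given the uniform bounds, using that the real slice is a uniqueness set for functions analytic on the connected $\mathcal{O}$ -- yields convergence of $f_\Lambda$ and of $|\Lambda|^{-1}\log Z_\Lambda$ throughout $\mathcal{O}$ to analytic limits. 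Running the same Asano analysis with the interaction perturbed arbitrarily and slightly gives differentiability of the pressure functional, and hence, by the tangent-functional characterization of equilibrium states (\cite{RuBk1}), uniqueness of the translation-invariant equilibrium state.

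\smallskip

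Finally, the mass gap. Fix admissible couplings, work in $z=e^{-2h}$, and set $g(k-\ell;z):=\jap{\sigma_k\sigma_\ell}-\jap{\sigma_k}\jap{\sigma_\ell}$ for the infinite-volume state (translation invariance makes this a function of $k-\ell$); by the above it is analytic on $\{|z|<1/(qI_0)\}$, and for each $\rho<1/(qI_0)$ one has $|g(k-\ell;z)|\le M(\rho)$ on $\{|z|\le\rho\}$ uniformly in $k,\ell$. The only combinatorial input in the whole argument -- and it needs no estimates, only connectivity counting -- is that $g(k-\ell;\cdot)$ vanishes at $z=0$ to order at least $\lceil|k-\ell|/L\rceil$, where $L=\max\{\diam(A):A\in\calA\}<\infty$ by \eqref{4.2}: writing $\sigma_j=1-2n_j$ gives $g(k-\ell;z)=4(\jap{n_kn_\ell}-\jap{n_k}\jap{n_\ell})$, and the coefficient of $z^N$ in this truncated two-point function is a sum over configurations of $N$ occupied sites whose interaction hypergraph must contain a component meeting both $k$ and $\ell$, which is impossible for $N<|k-\ell|/L$. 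The Schwarz lemma then gives $|g(k-\ell;z)|\le M(\rho)(|z|/\rho)^{\lceil|k-\ell|/L\rceil}$ on $\{|z|\le\rho\}$, so the quantity in \eqref{4.5} is at least $L^{-1}\log(\rho/|z|)-o(1)$ as $|k-\ell|\to\infty$; letting $\rho\uparrow1/(qI_0)$ yields $m\ge L^{-1}\log\bigl(1/(qI_0|z|)\bigr)>0$ for $|z|<1/(qI_0)$.
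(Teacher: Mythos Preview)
Your overall architecture matches the paper's: Asano--Ruelle contraction to get the zero-free polydisc, then Theorem~\ref{T2.3B} for the Newman ratio inequalities, Herglotz/Vitali for uniform bounds and convergence, and differentiability of the pressure under arbitrary perturbations $J(B)$ for uniqueness of the translation-invariant state. Two small discrepancies: (i) the paper does not carry a separate field factor $1+\zeta_j^{(0)}$ but folds the activity into the interaction polynomials $R_A$ of \eqref{4.14} directly, and (ii) the paper works with \emph{periodic} boundary conditions for the correlations, so that any limit point is automatically translation invariant and therefore (once uniqueness is proved) convergent; you instead use free BC and GKS monotonicity, which requires $J(A)\ge 0$ and so only gives convergence on the ferromagnetic part of the real slice---still a uniqueness set for $\mathcal O$, so Vitali goes through, but your sentence ``on the real slice'' should be read that way.

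The genuine divergence is the mass gap. The paper puts a parameter $\beta$ in front of all couplings, notes that at $\beta=0$ the state is an uncoupled product so the Ursell functions $u_{j+2}(\sigma_k,\sigma_\ell,\sigma^{A_1},\dots,\sigma^{A_j})$ vanish once $|k-\ell|>jR$ (Lemma~\ref{L4.4}), uses Lemma~\ref{L4.3} to get $m>0$ for small real $\beta$, and finally invokes Lebowitz--Penrose \cite{LebPen2} to transport positivity of $m$ across the whole analyticity region. You bypass both the Ursell machinery and the Lebowitz--Penrose step by expanding in $z=e^{-2h}$ about $z=0$ and reading off exponential decay directly from the Schwarz bound; this is more economical and even yields the explicit estimate $m\ge L^{-1}\log\bigl(1/(qI_0|z|)\bigr)$. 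Your connectivity claim is correct, but the phrase ``the coefficient of $z^N$ is a sum over configurations of $N$ occupied sites'' is not literally true for a ratio of polynomials. The clean statement is: if one sets $z_j=0$ for $j\notin S$, the lattice-gas measure factors over the interaction-components of $S$, so $\operatorname{Cov}(n_k,n_\ell)$ vanishes identically unless $k,\ell$ lie in the same component, which forces $|S|\ge\lceil|k-\ell|/L\rceil+1$; since the Taylor coefficient of any monomial supported on $S$ is determined by this restriction, every monomial of total degree $\le\lceil|k-\ell|/L\rceil$ has zero coefficient. With that sentence in place your mass-gap argument is complete.
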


\begin{remarks} 1. More precisely, there is a unique translation invariant equilibrium state when all parameters are real and in the larger set, one has joint analyticity in the set described.

2. These are the major results one gets from cluster expansions in the large $h$ region for spin $1/2$. In particular, it has the initial results needed to prove joint analyticity in $(\beta,h)$ \cite{LebPen1} and mass gap \cite{LebPen2} in the full $\Re(h)>0$ for pair interacting ferromagnetic Ising models (see Simon \cite[Sections 3.7-3.8]{PTLG}).

3. By Griffiths \cite{GriffTrick}, a spin $S$ equal weight model has the same states,  pressure, etc as an analog spin $1/2$ model so by considering that model and fixing the coupling within the spin $1/2$ spins which sum to a spin $S$, one can extend this theorem to the spin $S$ situation.
\end{remarks}

We will use Ruelle's \cite{RuLY1, RuLY2} extension of the Asano contraction theorem \cite{Asano2}

\begin{proposition} [Ruelle-Asano Theorem] \lb{P4.2} Let $\Lambda$ be a finite set, let $z_{\Lambda}\equiv\{z_x\}_{x\in\Lambda}$ be the coordinates of a point in $\bbC^{|\Lambda|}$ and for $X\subset\Lambda$, define
\begin{equation}\label{4.6}
  z^X = \prod_{x\in X} z_x
\end{equation}
Let $\{\Lambda_\alpha\}_{\alpha\in A}$ be a finite cover of $\Lambda$ and for each $\alpha\in A$, a polynomial,
\begin{equation}\label{4.7}
   P_\alpha(z_{\Lambda_\alpha}) = \sum_{X\subset\Lambda_\alpha}c_X^{(\alpha)}z^X
\end{equation}
Suppose for each $\alpha\in A$ and $x\in\Lambda$, we have a closed set $M_x^{(\alpha)}\subset\bbC\setminus \{0\}$ so that if $z_x\notin  M_x^{(\alpha)}$ for all $x\in\Lambda_\alpha$, we have that $P_\alpha(z_{\Lambda_\alpha})\ne 0$.  Let
\begin{equation}\label{4.8}
  P(z_\Lambda) = \sum_{X\subset\Lambda}\left(\prod_{\alpha\,\mid\,\Lambda_\alpha\cap X\ne\emptyset}c_{\Lambda_\alpha\cap X}^{(\alpha)}\right)z^X
\end{equation}
Then $P(z_{\Lambda})\ne 0$ if for all $x\in\Lambda$, one has that
\begin{equation}\label{4.9}
  z_x\notin -\prod_{\alpha\,\mid\,x\in\Lambda_\alpha}(-M_x^{(\alpha)})
\end{equation}
\end{proposition}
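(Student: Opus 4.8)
The plan is to realise $P$ by a cascade of ordinary (two-variable) Asano contractions — one for each site $x\in\Lambda$ — carrying the ``allowed'' value-sets along through each step. First decouple the factors: introduce independent formal variables $z_x^{(\alpha)}$ for every pair $(\alpha,x)$ with $x\in\Lambda_\alpha$, put $z^{(\alpha)}=(z_x^{(\alpha)})_{x\in\Lambda_\alpha}$, and set
\begin{equation*}
 \widetilde P=\prod_{\alpha\in A}P_\alpha\big(z^{(\alpha)}\big),
\end{equation*}
a multi-affine polynomial in the disjoint family $\{z_x^{(\alpha)}\}$. By hypothesis $\widetilde P\neq 0$ as soon as $z_x^{(\alpha)}\notin M_x^{(\alpha)}$ for every such pair, since then every factor is nonzero. (We may also normalise $c_\emptyset^{(\alpha)}=P_\alpha(0,\dots,0)=1$: this is harmless because $0\notin M_x^{(\alpha)}$ for all $x$ forces $P_\alpha(0,\dots,0)\neq 0$, and rescaling a factor changes neither its zero set nor the hypothesis.) For each fixed $x\in\Lambda$ one then ``glues'' the finitely many variables $\{z_x^{(\alpha)}:x\in\Lambda_\alpha\}$ into a single variable $z_x$ by a full Asano contraction; since these variable-groups are pairwise disjoint the gluings can be done one at a time, in any order.

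The engine is the classical two-variable contraction lemma of Asano and Ruelle (\cite{Asano2,RuLY1,RuLY2}): if $f(z,w)=a+bz+cw+dzw$ is nonvanishing whenever $z\notin M$ and $w\notin N$, with $M,N$ closed and $0\notin M\cup N$, then its contraction $\widehat f(u)=a+du$ is nonvanishing for $u\notin -MN:=\{-mn:m\in M,\ n\in N\}$. One begins from $a=f(0,0)\neq 0$ (the point $(0,0)$ lies in the nonvanishing region), so if $d\neq 0$ — the only nontrivial case — $\widehat f$ has the single zero $u_0=-a/d$; one then examines the zeros of $f$ on a hyperbola $zw=\mathrm{const}$: restricting to $w=u_0/z$ gives $z\,f(z,u_0/z)=bz^2+cu_0$, whose two roots $(\pm z_0,\pm u_0/z_0)$ are zeros of $f$, and combining the hypothesis at these points with the Möbius description $w=-(a+bz)/(c+dz)$ of the zero locus forces $u_0\in -MN$. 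Proving this cleanly for arbitrary closed sets — getting the case analysis to close, and handling the degenerate coefficient configurations and the behaviour at $\infty$ (which is exactly where $0\notin M\cup N$ is used) — is the real work; in the proposal I would simply cite it.

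Granting the lemma, gluing a group of $k$ variables is an induction on $k$: a multi-affine polynomial $Q$ in $w_1,\dots,w_k$, with all other variables treated as inert parameters, that is nonzero whenever $w_i\notin N_i$ for all $i$ has full contraction $\widehat Q(u)=q_\emptyset+q_{\{1,\dots,k\}}u$ nonvanishing for $u\notin -\prod_{i=1}^k(-N_i)$. The case $k=1$ is a tautology; for the step, first contract $w_1,\dots,w_{k-1}$ into one variable $v$ (applying the inductive hypothesis for each fixed value of $w_k$ and of the remaining parameters), obtaining a polynomial affine in $(v,w_k)$ that is nonzero for $v\notin -\prod_{i<k}(-N_i)$ and $w_k\notin N_k$; then apply the two-variable lemma to $(v,w_k)$, and check that the resulting exceptional set equals $-\prod_{i\le k}(-N_i)$ and that the coefficients of $1$ and of $w_1\cdots w_k$ come out the same as in the direct contraction. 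Performing this once for each group $\{z_x^{(\alpha)}:x\in\Lambda_\alpha\}$ — with the already-glued $z_{x'}$ and the not-yet-glued variables serving as parameters at each stage — shows the fully contracted polynomial is nonvanishing whenever, for every $x\in\Lambda$,
\[
 z_x\notin -\!\!\prod_{\alpha\,\mid\,x\in\Lambda_\alpha}\!\!\big(-M_x^{(\alpha)}\big),
\]
which is exactly \eqref{4.9}.

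It remains to identify this fully contracted polynomial with $P$ of \eqref{4.8}. Expanding, $\widetilde P=\sum_{(Y_\alpha)}\big(\prod_\alpha c_{Y_\alpha}^{(\alpha)}\big)\prod_\alpha\prod_{x\in Y_\alpha}z_x^{(\alpha)}$, the sum over all families $Y_\alpha\subseteq\Lambda_\alpha$; the contraction over the group of $x$ retains a monomial only if, for that $x$, the variables $z_x^{(\alpha)}$ (over the $\alpha$ with $x\in\Lambda_\alpha$) either all occur or all are absent, and imposing this for every $x$ forces $Y_\alpha=\Lambda_\alpha\cap X$ for a common $X\subseteq\Lambda$, the surviving monomial being $\big(\prod_\alpha c_{\Lambda_\alpha\cap X}^{(\alpha)}\big)z^X$, which — since the factors with $\Lambda_\alpha\cap X=\emptyset$ equal $c_\emptyset^{(\alpha)}=1$ — is precisely the $X$-term of \eqref{4.8}. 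Hence the fully contracted polynomial is $P$ itself, and the nonvanishing statement just established is the conclusion of the Proposition. The one real obstacle is the two-variable base lemma in full generality; everything downstream of it is bookkeeping, the only delicacy being the alternating signs packaged in $-\prod_\alpha(-M_x^{(\alpha)})$.
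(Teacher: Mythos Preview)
The paper does not actually prove Proposition~\ref{P4.2}; it is quoted as a known result with citations to Ruelle \cite{RuLY1,RuLY2} and Asano \cite{Asano2}, and the paragraph immediately following the statement only sketches the \emph{intuition}: form $\prod_\alpha P_\alpha(\{z_{\alpha,x}\})$ with independent copies of each variable, then Asano-contract the copies at each site pairwise down to a single $z_x$, and track zeros. Your proposal is precisely a fleshed-out version of that sketch --- decouple, invoke the two-variable Asano lemma as the atomic step, induct to handle $k$-fold contractions, and finally identify the surviving monomials with \eqref{4.8} via the ``all-or-none'' selection rule. So there is no divergence of approach to report: you have supplied the details the paper deliberately omits, and the argument is correct.

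Two small remarks. First, your normalisation $c_\emptyset^{(\alpha)}=1$ is exactly what is needed to reconcile your computed coefficient $\prod_{\alpha}c_{\Lambda_\alpha\cap X}^{(\alpha)}$ with the paper's \eqref{4.8}, which restricts the product to $\alpha$ with $\Lambda_\alpha\cap X\ne\emptyset$; indeed \eqref{4.8} as written tacitly assumes this normalisation (look at the $X=\emptyset$ term). Second, in the inductive step you need the intermediate set $-\prod_{i<k}(-N_i)$ to again be closed and miss $0$ so that the two-variable lemma applies; missing $0$ is immediate, but closedness of a product of closed subsets of $\bbC\setminus\{0\}$ is not automatic in general. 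In Ruelle's treatment this is handled by working on the Riemann sphere (where the sets are compact), and since you are citing the base lemma anyway this is a harmless technicality --- but it is worth flagging if you intend the writeup to be self-contained.
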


Rather than make use of the formula for the final coefficients in \eqref{4.8}, we'll use the Asano contraction intuition that leads to it.  Multiaffine polynomials like \eqref{4.7} arise in the Lee-Yang \cite{LY1} scheme by replacing $h\sum_{j\in\Lambda}\sigma_j$ by $\sum_{j\in\Lambda}h_j\sigma_j$, multiplying $Z_\Lambda$ by $\exp(\sum_{j\in\Lambda}h_j\sigma_j)$ and writing the result as a function of $z_j=e^{2h_j}$.  Asano contraction results from forcing two spins, say, $\sigma_k$ and $\sigma_\ell$ to be parallel, i.e. dropping the terms with $\sigma_k=-\sigma_\ell$ and replacing $h_k\sigma_k+h_\ell\sigma_\ell$ by a single $h\sigma$ term.  Given the $P_\alpha$'s, one introduces variables $\{z_{\alpha,x}\}_{x\in\Lambda_\alpha}$, forms $\prod_\alpha P_\alpha(\{z_{\alpha,x}\}_{x\in\Lambda_\alpha})$ then contracts for each $x\in\Lambda$ pairwise all the $z_{\alpha,x}$ with $x\in\Lambda_\alpha$.  What results is \eqref{4.8} and the Proposition just tracks its zeros.

We'll need two lemmas to get the mass gap.

The first is an elementary piece of complex analysis:

\begin{lemma} \lb{L4.3} Let $f$ be analytic in $\bbD_R(0)$ with $C=\sup_{z\in\bbD_R(0)}|f(z)|<\infty$.  Suppose that $f^{(k)}(0)=0$ for $k=0,\dots,K-1$.  Then for any $z\in\bbD_R(0)$, we have that
\begin{equation}\label{4.10}
  |f(z)| \le \frac{C(|z|/R)^K}{1-(|z|/R)}
\end{equation}
\end{lemma}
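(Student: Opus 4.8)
The plan is to expand $f$ in its Taylor series about $0$, use the vanishing of the first $K$ derivatives to start the sum at $n=K$, and then estimate the tail geometrically using Cauchy estimates on the coefficients. Concretely, write $f(z) = \sum_{n=K}^{\infty} a_n z^n$, where the sum starts at $n=K$ precisely because $f^{(k)}(0)=0$ for $k=0,\dots,K-1$. The series converges on $\bbD_R(0)$ since $f$ is analytic there, so for any fixed $z$ with $|z|<R$ we may bound $|f(z)| \le \sum_{n=K}^{\infty} |a_n|\,|z|^n$.

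The next step is to control $|a_n|$. Choose any radius $\rho$ with $|z| < \rho < R$; by the Cauchy integral formula for the Taylor coefficients, $|a_n| \le C_\rho \rho^{-n}$, where $C_\rho = \sup_{|w|=\rho}|f(w)| \le C$. Substituting this in gives $|f(z)| \le C \sum_{n=K}^{\infty} (|z|/\rho)^n = C\,(|z|/\rho)^K/(1-|z|/\rho)$, a convergent geometric series since $|z|/\rho < 1$. Now let $\rho \uparrow R$: the right-hand side is continuous in $\rho$ for $\rho > |z|$, and its limit as $\rho \to R$ is exactly $C\,(|z|/R)^K/(1-|z|/R)$, which is the claimed bound \eqref{4.10}. (If one prefers to avoid the limiting argument, one can instead invoke the Schwarz-type lemma: $g(z) = f(z)/z^K$ extends analytically to $\bbD_R(0)$ with $\sup_{|w|=\rho}|g(w)| \le C\rho^{-K}$, and the maximum modulus principle gives $|g(z)| \le C/R^K$ on all of $\bbD_R(0)$; but this yields only $|f(z)| \le C(|z|/R)^K$, which is weaker than \eqref{4.10}, so the geometric-tail computation is the one that gives the stated constant.)

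There is really no serious obstacle here: the only mild subtlety is getting the sharp denominator $1-(|z|/R)$ rather than, say, a crude bound with $R$ replaced by a fixed intermediate radius, and this is handled by the $\rho \uparrow R$ passage together with the fact that $C_\rho \le C$ uniformly. Everything else is the standard Cauchy-estimate-plus-geometric-series argument.
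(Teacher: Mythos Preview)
Your proof is correct and follows essentially the same route as the paper: Taylor expand, drop the first $K$ terms, bound coefficients by Cauchy estimates, and sum the geometric series. The only difference is cosmetic---you pass through an intermediate radius $\rho$ and let $\rho\uparrow R$, whereas the paper simply states $|a_n|\le CR^{-n}$ directly (which is itself justified by exactly that limiting argument).
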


\begin{proof} Let $f(z)=\sum_{n=0}^{\infty}a_n z^n$ be the Taylor expansion about $z=0$ for $f$. By a Cauchy estimate, \cite[Theorem 3.1.8]{BCA}
\begin{equation}\label{4.11}
  |a_n| \le CR^{-n}
\end{equation}
By the hypothesis on derivatives, the sum in the Taylor series starts at $n=K$.  We can sum the bounds in \eqref{4.11} using a geometric series to get \eqref{4.10}.
\end{proof}

The Ursell functions are defined by

\begin{equation}\label{4.12}
  u_n(X_1,X_2,\dots,X_n) = \left .\frac{\partial^n}{\partial h_1\cdots \partial h_n} \log\bigjap{\exp\left(\sum_{j=1}^{n}h_j X_j\right)}
                                                  \right|_{h_j=0}
\end{equation}

\begin{lemma} \lb{L4.4} Let $\jap{\cdot}$ be a product measure on single site distributions on $\Lambda\subset\bbZ^\nu$ (could be an infinite set).  For any finite set, $A\subset\Lambda$, let $\diam(A)=\max_{j\ne m\in A}|j-m|$.  Let $A_1,\dots,A_k$ be $k$ sets each with $\diam(A_j)\le R$.  Let $p,\ell\in\Lambda$ so that $|p-\ell| > kR$.  Then the Ursell function $u_{k+2}(\sigma_p,\sigma_\ell,\sigma^{A_1},\dots,\sigma^{A_k})= 0$.
\end{lemma}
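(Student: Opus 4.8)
The plan is to reduce everything to the basic cluster (factorization) property of Ursell functions and then run a graph-connectivity argument powered by the diameter hypothesis. Recall from \eqref{4.12} that if a list of random variables $Y_1,\dots,Y_s$ with $s\ge 2$ splits into two nonempty blocks that are mutually independent, then $u_s(Y_1,\dots,Y_s)=0$: the generating function $\log\bigjap{\exp(\sum_j h_jY_j)}$ then decomposes as a sum of a function of the $h_j$'s in the first block and a function of those in the second, so every mixed partial derivative (in particular one involving an index from each block) vanishes. Since $\jap{\cdot}$ is a product measure over the sites, two sub-collections of $\sigma_p,\sigma_\ell,\sigma^{A_1},\dots,\sigma^{A_k}$ are automatically independent once the unions of sites they involve are disjoint.

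First I would encode the combinatorics in a graph $G$ whose vertex set consists of the $k+2$ index sets $\{p\},\{\ell\},A_1,\dots,A_k$, with an edge between two vertices exactly when the corresponding subsets of $\bbZ^\nu$ intersect. If $\{p\}$ and $\{\ell\}$ lie in different connected components of $G$ (in particular if either is an isolated vertex), then letting one block be the variables attached to the component of $\{p\}$ and the other block be all remaining variables yields a partition of $\{\sigma_p,\sigma_\ell,\sigma^{A_1},\dots,\sigma^{A_k}\}$ into two nonempty families supported on disjoint sets of sites, hence independent; by the previous paragraph $u_{k+2}=0$.

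The crux is therefore to show that $\{p\}$ and $\{\ell\}$ cannot be joined by a path in $G$. Suppose they were, and take a shortest such path; after removing repetitions it passes through distinct sets $A_{i_1},\dots,A_{i_m}$ with $p\in A_{i_1}$, $\ell\in A_{i_m}$, and $A_{i_r}\cap A_{i_{r+1}}\ne\emptyset$ for $1\le r<m$, where necessarily $m\le k$. Picking $x_r\in A_{i_r}\cap A_{i_{r+1}}$ for each $r$ and applying the triangle inequality along the chain $p,x_1,\dots,x_{m-1},\ell$, each consecutive gap being bounded by a diameter $\diam(A_{i_r})\le R$, gives $|p-\ell|\le mR\le kR$, contradicting $|p-\ell|>kR$. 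Hence $\{p\}$ and $\{\ell\}$ sit in distinct components of $G$ and the lemma follows.

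I do not anticipate a genuine obstacle; the only point requiring care is stating the factorization property of $u_s$ at the right level of generality, i.e.\ for independence of two \emph{blocks} of (possibly vector-valued, here monomial-valued) variables rather than for a single variable, and observing that under the product measure "disjoint site supports" is precisely what produces that block independence. The diameter and triangle-inequality estimate is then entirely routine.
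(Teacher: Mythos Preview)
Your proposal is correct and is essentially the paper's proof with the details filled in: the paper merely asserts that ``an easy geometric argument'' produces a partition of $\{1,\dots,k\}$ into $P$ and $Q$ with $\{p\}\cup\bigcup_{j\in P}A_j$ disjoint from $\{\ell\}\cup\bigcup_{j\in Q}A_j$, and then invokes the block-independence vanishing of Ursell functions (the ``second Percus axiom''), which is exactly your factorization step. Your graph-connectivity/triangle-inequality argument is a clean explicit realization of that ``easy geometric argument''.
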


\begin{proof} If $|p-\ell|>kR$, an easy geometric argument proves that one can break $\{1,\dots,k\}$ into two sets $P$ and $Q$ so that $\{p\}\cup_{j\in P} A_j$ is disjoint from $\{\ell\}\cup_{j\in Q} A_j$.  It is a basic fact (sometimes called the second Percus axiom (Percus \cite{Percus}) that if a set of variables can be broken into two independent pieces, its Ursell function vanishes; this follows immediately from \eqref{4.12} if we note that the $\log$ is a sum of $\log$s because of independence.  We conclude that the specified $u_{k+2}$ is zero.
\end{proof}

\begin{proof} [Proof of Theorem \ref{T4.1}] For each set finite $\Lambda\subset\bbZ^\nu$, we define (with $z^B=\prod_{j\in B}z_j$)
\begin{equation}\label{4.13}
  Z_{\Lambda,J(A)}(z)=\sum_{A\subset \Lambda } z^A p(A); \qquad   p(A)=\prod_{\mathclap{\substack{B\subset A \\ \#(B)>1,\text{ odd}}}}e^{-2J(B)}
\end{equation}
which with $z_j=e^{-2h_j}$ is the (analytic continuation of the) partition function in $j$ dependent field.  $A$ is the set of negative spins and $p(A)$ the Gibbs factor for $A$.

We model our proof on the Asano proof of the Lee-Yang circle theorem.  We define, for each $A$ with $A\in\calA$, as a polynomial of $|A|$ variables $\{z_{j,A}\}_{j\in A}$
\begin{equation}\label{4.14}
  R_A(\{z_{j,A}\}_{j\in A}) = \sum_{\mathclap{\substack{B\subset A \\ \text{ even}}}}\left(\prod_{j\in B}z_{j,A}\right)+
                                  e^{-2J(A)}\sum_{\mathclap{\substack{B\subset A \\ \text{ odd}}}}\left(\prod_{j\in B}z_{j,A}\right)
\end{equation}

The term for $B=\emptyset$ is $1$, and, if $r\equiv\max_j |z_{j,A}|\le 1$, all other terms are bounded by $e^{J_0(A)} r$, so $|R_A(z)-1| < 2^{|A|}e^{J_0(A)} r \le rI_0$, so if $r<1/I_0$, we see that $R_A(z)\ne 0$.

We can get $Z_\Lambda(\{z_j\}_{j\in\Lambda})$ by taking $\prod_{A\subset\Lambda} R_A(\{z_{j,A}\}_{j\in A})$ (i.e. copies of $z_j$ for each $A\ni j$) and Asano contracting all the copies of $z_{j,A}$ together to a single $z_j$.  Since there are at most $q$ copies of each $z_j$, we see that, by Proposition \ref{P4.2}, $Z$ is non-vanishing if all $z_j$ obey $|z_j|< 1/qI_0\equiv R_0$.  It will be important later (not to control pressure but to control correlation functions), that the same is true of the $Z_{per}$ defined with periodic BC, i.e. if we take $\Lambda$ to be a hypercube and take all $A$ with $J(A)\ne 0$ with $A\subset\Lambda$ and take all translates in $\Lambda$ when we connect it to a torus.

By the usual Lee-Yang argument \cite{LY1} (Vitali's convergence theorem), this proves joint analyticity of the pressure in the limit (for $z_j$ all taken equal).  Fix some real value of all $J(A)$ finite range. Now pick a $B$ with $B\notin\calA$ and form $Z_{B,J_B}(z)$.  When $J(B)=0$, $Z_B(z) = \prod_{j\in B}(1+z_j)$ is non-vanishing when all $z_j\in\bbD$.  By continuity of the zeros, it follows that for any $\rho<1$, we can find $\varepsilon(\rho)$ so that if $|J(B)|<\varepsilon(\rho)$, then $Z_{B,J_B}(z)\ne 0$ so long as $z\in\bbD_\rho(0)$.  If we now form $Z_\Lambda$ with an additional of $J(B)$ term, then $Z\ne 0$ so long as $|z_j|<\rho^{p}R_0=\rho^{p}/qI_0$ where $p=\#(B)$, for we get this from the $J(B)=0$ situation by an additional Asano contraction of all $Z_{C,J_B}(z)$ with $C$ a translation of $B$ that lies inside $\Lambda$ and each $z_j$ is involved in at most $p$ such contractions.  We thus see, because $\rho$ can be taken arbitrarily close to $1$, that there is a complex neighborhood (in $\bbC^{|\Lambda|+1}$) in the set of complex $z_j$ with $|z_j|<1/I_0q$ and $J(B)=0$ where the new $Z$ is non-vanishing.  It follows that for any real $h$ with $e^{-2h}<1/qI_0$, that the pressure is real analytic near $J(B)=0$, so, by \cite[Theorem III.3.11]{SMLG}, $\jap{\sigma^B}$ is the same in all translation invariant equilibrium states.  We also have analyticity in $J(A)$ where $J_0(A)\ne 0$.   Since $B$ was arbitrary, we conclude there is a unique such state!

Because any limit point of the periodic BC states is a translation invariant equilibrium state, we conclude that every such limit point is the unique translation invariant equilibrium state and thus we have convergence to that state.  By Theorem \ref{T2.3B}, we see that in the region of joint analyticity in $h$ and $J(A)$ described above, for any $j_1,\dots,j_\ell$ in a torus, $\Lambda$, we have that
\begin{equation}\label{4.15}
  \Re\left(\frac{\jap{\sigma_{j_1}\dots\sigma_{j_\ell}}_\Lambda}{\jap{\sigma_{j_1}\dots\sigma_{j_{\ell-1}}}_\Lambda}\right)>0
\end{equation}
where the expectation is with periodic BC.  Fix real values of the parameters. Since any limit point of the periodic BC state is a translation invariant equilibrium state and such a state is unique, we have convergence for such real parameters. By Theorem \ref{T2.5} and \eqref{4.15} by induction, we get uniform (in $\Lambda$) bounds on the correlation functions in the region of analyticity, so by Vitali's theorem we get convergence and analyticity of the infinite volume limit.

Fix some real $h$ in the region of analyticity.  Put a $\beta$ in front of all of the other couplings.  If we show $m>0$ for all small real positive $\beta$, then by the method of Lebowitz-Penrose \cite{LebPen2}, it is strictly positive everywhere in the region of analyticity. Because of Lemma \ref{L4.3} and the uniform bounds on $\jap{\sigma_k\sigma_\ell}-\jap{\sigma_k}\jap{\sigma_\ell}$ that follow from \eqref{4.15} and \eqref{2.15}, it suffices to prove that if, for some $R>0$, $|k-\ell|>QR$ implies that the $j$th derivative of $\jap{\sigma_k\sigma_\ell}-\jap{\sigma_k}\jap{\sigma_\ell}$ with respect to $\beta$ at $\beta=0$ vanishes for $j<Q$.  Since the limit of states at $\beta=0$ is uncoupled (and so independent) single sites, this follows with $R$ the range of the interaction because of Lemma \ref{L4.4} and \eqref{4.12} which shows that the $j$th derivative in question is a sum of Ursell functions of the form $u_{q+2}(\sigma_k,\sigma_\ell,\sigma^{A_1},\dots,\sigma^{A_k})$ with each $A_j\in\calA$.
\end{proof}

\begin{remark}  Instead of getting analyticity of correlations from \eqref{4.15}, one could use the proof of uniqueness of state - it implies joint analyticity of the pressure and so analyticity of the derivative which is the correlation.
\end{remark}


\end{document}